\definecolor{ptblue}{RGB}{15,76,129} 
\definecolor{ptemerald}{HTML}{009473} 
\definecolor{bluegray}{rgb}{0.4, 0.6, 0.8}
\definecolor{ptilluminating}{HTML}{F5DF4D} 
\definecolor{ptgray}{HTML}{939597} 
\DeclareMathOperator*{\argmax}{arg\,max}
	\let\Cref\crtCref
	\let\cref\crtcref
\theoremstyle{plain}
\newtheorem{theorem}{Theorem}[section]
\newtheorem{corollary}[theorem]{Corollary}
\newtheorem{proposition}[theorem]{Proposition}
\newtheorem{lemma}[theorem]{Lemma}
\newtheorem{claim}[theorem]{Claim}
\theoremstyle{definition}
\newtheorem{definition}[theorem]{Definition}
\newtheorem*{theorem*}{Theorem}
\theoremstyle{remark}
\newtheorem*{remark}{\upshape\bfseries Remark}
\definecolor{cobalt}{rgb}{0.0, 0.28, 0.67}
\newcommand{\ms}[1]{{#1}\xspace}
\newcommand{\EF}[1]{\if\relax\detokenize\expandafter{\@firstofone#1{}}\relax EF\xspace\else EF#1\fi}
\newcommand{\EFone}{\textsc{EF1}\xspace}
\newcommand{\EFX}{\textsc{EFx}\xspace}
\newcommand{\PO}{\textsc{PO}\xspace}
\newcommand{\NSW}{\operatorname{NSW}}
\newcommand{\set}[1]{{\left\{#1\right\}}}
\newcommand{\calA}{\mathcal{A}}
\newcommand{\calB}{\mathcal{B}}
\newcommand{\calI}{\mathcal{I}}
\newcommand{\calP}{\mathcal{P}}
\newcommand{\tilA}{\widetilde{\mathcal{A}}}
\newcommand{\tildA}{\widetilde{A}}
\newcommand{\calR}{\mathcal{R}}
\newcommand{\calM}{\mathcal{M}}
\newcommand{\calF}{\mathcal{F}}
\title{\bfseries Compatibility of Fairness and Nash Welfare under \\ Subadditive Valuations}
\author{Siddharth Barman\thanks{Indian Institute of Science, Bangalore;  {barman@iisc.ac.in} }  \and  Mashbat Suzuki\thanks{UNSW, Sydney; mashbat.suzuki@unsw.edu.au}}
\date{}
\begin{document}

\maketitle 

\abstract{
	We establish a compatibility between fairness and efficiency, captured via Nash Social Welfare (NSW), under the broad class of subadditive valuations. We prove that, for subadditive valuations, there always exists a {\it partial} allocation that is envy-free up to the removal of any good ($\EFX$) and has NSW at least half of the optimal; here, optimality is considered across all allocations, fair or otherwise. We also prove, for subadditive valuations, the universal existence of complete allocations that are envy-free up to one good ($\EFone$) and also achieve a factor $1/2$ approximation to the optimal $\NSW$. Our $\EFone$ result resolves an open question posed by Garg, Husic, Li, V{\'{e}}gh, and Vondr{\'{a}}k (STOC 2023). 
	
	In addition, we develop a polynomial-time algorithm which, given an arbitrary allocation $\tilA$ as input, returns an $\EFone$ allocation with NSW at least $\frac{1}{e^{2/e}}\approx \frac{1}{2.08}$ times that of $\tilA$. Therefore, our results imply that the $\EFone$ criterion can be attained simultaneously with a constant-factor approximation to optimal NSW in polynomial time (with demand queries), for subadditive valuations. The previously best-known approximation factor for optimal NSW, under $\EFone$ and among $n$ agents, was $O(n)$ -- we improve this bound to $O(1)$. 
	
	It is known that $\EFone$ and exact Pareto efficiency ($\PO$) are incompatible with subadditive valuations. Complementary to this negative result, the current work shows that we regain compatibility by just considering a factor $1/2$ approximation: $\EFone$ can be achieved in conjunction with $\frac{1}{2}$-$\PO$ under subadditive valuations. As such, our results serve as a general tool that can be used as a black box to convert any efficient outcome into a fair one, with only a marginal decrease in efficiency.
}

\setcounter{page}{1}

\section{Introduction}
A foundational theme in fair division is the study of the interplay between fairness and efficiency; see, e.g., \citep{Moulin04} and \cite{Barbanel05} for textbook treatments, specifically for divisible resources. Indeed, optimizing the tradeoff between fairness and allocation efficiency has wide-ranging implications in many real-world domains. Motivated by such considerations, the question of when and to what extent one can achieve individual fairness in conjunction with collective welfare has been extensively studied across many disciplines, including economics, mathematics, and computer science~\citep{Moulin04,Varian73,AAB23}. The current work contributes to this theme in the context of indivisible goods. That is, we focus on settings wherein a set of indivisible goods (resources) have to be partitioned fairly and efficiently among agents who have individual preferences over the goods. 

We begin by presenting the formalisms for fairness and efficiency for the current context and then detail our results. 


\medskip

\noindent
{\bf Fairness.} A mathematical study of fairness in resource-allocation settings dates back to the work of Banach, Knaster, and \citet{Ste48}.  In the past several decades, multiple definitions of fairness have been proposed to capture requirements in various application contexts. In this body of work, envy-freeness (EF) stands out as a quintessential notion of fairness, requiring that no agent values the resources allocated to another agent more than the resources allocated to themselves.

In the setting of indivisible goods, it is impossible to guarantee envy-freeness. This nonexistence is evident even with just two agents and a single good: the agent who does not receive the good envies the one who does. As a result, meaningful relaxations of envy-freeness have been developed in the discrete fair division literature. A prominent notion here---proposed by \citet{Bud11}---is envy-freeness up to one good ($\EFone$), and it deems a partition of the indivisible goods to be fair if envy between any two agents can be resolved by removing \textit{some} good from the envied agent's bundle. 

\citet{LMM04} showed that $\EFone$ allocations of indivisible goods can be computed in polynomial time for general monotone valuations. A notable strengthening of the $\EFone$ criterion, called envy-free up to \textit{any} good (\EFX), was proposed by \citet{CKM16}. $\EFX$ requires that envy between agents can be mitigated by removing \emph{any} good from the envied agent's bundle. The existence of $\EFX$ allocations is a central open problem in discrete fair division \cite{procaccia2020technical}; the existence is not known even for instances with four agents who have additive valuations~\cite{CGM20}. Nonetheless, \cite{CKMS20} shows that by keeping a bounded number of goods unassigned---i.e., by considering partial allocations---one can obtain $\EFX$ for general monotone valuations. The main fairness notions studied in this work are $\EFone$ and (partial) $\EFX$ allocations.

\medskip

\noindent
{\bf Efficiency.} In allocative contexts, efficiency is typically measured by welfare functions \citep{Moulin04}. Indeed, allocations that optimize standard welfare functions are guaranteed to be Pareto efficient/ Pareto optimal ($\PO$). Three prevalent welfare functions in mathematical economics are utilitarian social welfare, Nash social welfare ($\NSW$), and egalitarian welfare. 
 
Here, utilitarian social welfare and egalitarian welfare are the two extreme objectives. $\NSW$ provides a balance between the two. In particular, utilitarian social welfare is defined as the sum of the agents' valuations, and egalitarian welfare is the minimum valuation across the agents. Maximizing utilitarian social welfare can lead to unfair outcomes, such as allocating all items to a single agent. On the other extreme, the egalitarian objective aims to maximize the valuation of the least satisfied agent and, hence, overlooks the utility of the collective. 

Nash social welfare is the geometric mean of the agent's valuations. It strikes a balance between the utilitarian (arithmetic mean) and the egalitarian (minimum across values) criteria. Owing to its appealing properties and axiomatic justifications, $\NSW$ stands as a prominent welfare notion in discrete fair division; see, e.g., \cite{CKM16} and \cite{AAB23}. Along these lines, the current work expresses  efficiency of allocations as their Nash social welfare.   

We also note that high $\NSW$ directly provides approximation guarantees for Pareto efficiency: for any $\alpha \in [0,1]$, an allocation that achieves $\alpha$ fraction of the optimal $\NSW$ is also $\alpha$-$\PO$.\footnote{For parameter $\alpha \in [0,1]$, an allocation $\calA$ is said to be $\alpha$-$\PO$ if there is no other allocation $\calB$ such that---for every agent $i$---$\alpha$ times $i$'s valuation in $\calB$ is at least as much as $i$'s valuation in $\calA$, and the inequality is strict for at least one agent. Here, $\alpha=1$ coincides with standard PO. Also, note that an allocation that achieves $\alpha$ fraction of the optimal $\NSW$ is also $\alpha$-$\PO$.} 


%
%

\medskip

\noindent
{\bf Fairness Along with Efficiency.}  Under additive valuations, maximizing $\NSW$ results in an \EFone outcome \cite{CKM16}. Hence, for additive valuations, one can achieve perfect compatibility between fairness and efficiency by maximizing $\NSW$. However, beyond additive valuations, maximizing Nash welfare no longer guarantees \EFone. In fact, the existence of allocations that are simultaneously \EFone and  \PO is open for submodular valuations. As for subadditive valuations, \citet{CKM16} showed that \EFone and exact \PO (and, hence, exact optimality for $\NSW$) are incompatible. 


With this backdrop and contributing to an understanding of the interplay between fairness and efficiency, the current work establishes that to regain compatibility, it suffices to consider a factor $1/2$ approximation. In particular, we prove that, for subadditive valuations, fairness (formalized by $\EFone$ and partial $\EFX$) can always be achieved with a factor $1/2$ approximation to the optimal $\NSW$; here, optimality is considered across all allocations, fair or otherwise. As mentioned, such an approximation guarantee implies that the corresponding fair allocations are $\frac{1}{2}$-$\PO$.

\subsection{Our Results}
We introduce an algorithmic approach that, despite its simplicity, generalizes existing results and leads to novel guarantees. 

Our contribution for partial $\EFX$ is stated next. 

\begin{theorem}\label{thm1:efx} Every fair division instance, with subadditive valuations, admits a partial \EFX allocation with $\NSW$ at least half of the optimal. 
\end{theorem}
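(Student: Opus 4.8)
The plan is to begin from a globally \NSW-optimal allocation $\calB = (B_1, \dots, B_n)$ and to convert it into a \emph{partial} \EFX allocation while giving up at most a factor of two in Nash welfare. Write $x_i := v_i(B_i)$. If $x_i = 0$ for some $i$, then the optimal \NSW is $0$ and the empty allocation already attains the bound, so assume $x_i > 0$ for all $i$. Since \NSW is a geometric mean, it suffices to build an \EFX partial allocation $\calA = (A_1, \dots, A_n)$ with $v_i(A_i) \ge \tfrac{1}{2}\, x_i$ for every agent $i$: then $\big(\prod_i v_i(A_i)\big)^{1/n} \ge \big(\prod_i \tfrac12 x_i\big)^{1/n} = \tfrac12 \big(\prod_i x_i\big)^{1/n}$, i.e.\ half the optimal \NSW. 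Thus the whole problem reduces to an \EFX-ification of $\calB$ that never lets any agent fall below half of its optimal value.

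For the construction I would maintain a partial allocation together with a discard pool, preserving the invariant that (i) the allocated goods form an \EFX allocation and (ii) $v_i(A_i) \ge \tfrac12 x_i$ for all $i$. Initialize $A_i := B_i$, so (ii) holds with equality and only (i) may fail. While some pair of agents $i, j$ and good $g \in A_j$ witness a strong-envy violation, i.e.\ $v_i(A_i) < v_i(A_j \setminus \{g\})$, I resolve it by reassigning the coveted bundle toward the envious agent and moving surplus goods to the discard pool. To keep every agent weakly improving, I rotate bundles along directed envy cycles in the manner of Lipton--Markakis--Mossel--Saberi (each agent on the cycle takes the bundle it envies), which never decreases any agent's value, and only the ``sources'' of the envy graph ever receive fresh goods or have their bundles trimmed.

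Subadditivity is what controls the loss at the trimming steps. Because the envy-cycle rotations are weakly improving and each agent starts at value $x_i$, an agent about to have its bundle trimmed still values that bundle at least $x_i$. If trimming splits such a bundle $S$ (held by agent $i$) into a retained part $P$ and a discarded part $Q = S \setminus P$, subadditivity gives $v_i(P) + v_i(Q) \ge v_i(S) \ge x_i$, so keeping the more valuable of $P, Q$ leaves agent $i$ with value at least $\tfrac12 x_i$, preserving the floor in (ii). I would design the trimming rule so that each agent's held bundle is cut at most once, which caps the per-agent multiplicative loss at two and hence the total \NSW loss at two.

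The hard part is establishing that the invariant (\EFX together with the half-value floor) can be maintained \emph{simultaneously} and that the procedure terminates. Unlike \EFone, reassigning or adding a good can manufacture new \EFX violations --- this is precisely why discarding goods, and hence a partial allocation, is unavoidable here. The crux is to exhibit a potential that strictly improves at every reassign-and-trim step (for instance a lexicographic increase in the sorted vector of the agents' values, or a decrease in the number of allocated goods coupled with a reduction in total envy) so that only finitely many steps occur, all the while never dropping an agent below $\tfrac12 x_i$. Once this termination-plus-invariant claim is in hand, with subadditivity supplying the factor of two at each forced cut, the Nash-welfare guarantee is immediate from the reduction in the first paragraph.
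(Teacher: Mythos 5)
There is a genuine gap here, and it sits exactly where you acknowledge it: your entire argument reduces the theorem to a strictly \emph{stronger} per-agent claim---that a partial \EFX allocation exists with $v_i(A_i) \geq \frac{1}{2} v_i(A^*_i)$ for \emph{every} agent $i$---and then defers the proof of that claim (''the hard part,'' ''the crux is to exhibit a potential'') to an unconstructed trimming rule and an unspecified termination argument. As written this is a plan, not a proof. It is worth noting that the paper never proves, and never needs, the per-agent floor: its guarantee is aggregate. In the paper's analysis each agent $i$ may lose a factor $1+|S_i|$, where $S_i$ indexes the final bundles contained in $A^*_i$, and individual agents can end far below half of $v_i(A^*_i)$; the factor $2$ appears only after AM--GM, using $\sum_i (1+|S_i|) = 2n$ because the $S_i$'s partition $[n]$. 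So your reduction commits you to a statement that is harder than the theorem and that, for subadditive valuations, nothing in your sketch (or in the paper) establishes.

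The sketched mechanics also break at specific points. First, when $i$ strongly envies $A_j$, splitting $A_j$ into $P$ and $Q$ and keeping whichever is more valuable \emph{under $v_j$} preserves $j$'s floor but need not remove $i$'s strong envy, since the violation lives in $v_i$; conversely, shrinking to an inclusion-minimal envied set under $v_i$ (the move that does kill the violation, and the one the paper uses) gives no lower bound on $v_j$ of the retained part, so the floor dies. You cannot have both with a single cut. Second, envy-cycle rotations only help when cycles exist; strong envy can be acyclic, and then ''reassigning the coveted bundle toward the envious agent'' zeroes out the dispossessed agent. Third, ''each agent's bundle is cut at most once'' is asserted without a construction: nothing prevents fresh \EFX violations against an already-trimmed bundle, and a second cut leaves only a quarter of $x_i$ by the same subadditivity argument. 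The paper sidesteps all three difficulties by inverting the construction: instead of starting from $\calA^*$ and repairing it, it grows bundles from empty, always assigning an inclusion-wise \emph{minimal} subset $X$ of the still-unallocated goods that some agent envies. Minimality makes \EFX automatic ($v_i(P_i) \geq v_i(X \setminus \{g\})$ for every $g \in X$ and every $i$), termination is immediate since the chosen agent's value strictly increases each round, and the welfare loss is charged per \emph{optimal bundle} via subadditivity, $v_i(A^*_i) \leq (1+|S_i|)\, v_i(\overline{P}_i)$, rather than per agent. To salvage your route you would have to actually prove the per-agent-half invariant for subadditive valuations, which is a separate and substantially harder task than the theorem itself.
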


\ms{This theorem generalizes the fairness and Nash welfare guarantees of \citet{CGH19} (who showed that the same bound holds for additive valuations) to the broadest class in the hierarchy of complement-free valuations, i.e., to subadditive valuations.} Also, it is relevant to note that the bound obtained in Theorem~\ref{thm1:efx} is tight, since there exists instances wherein no \EFX allocation (partial or otherwise) can achieve better than $1/2$ approximation to the optimal $\NSW$, even for additive valuations  \cite{CGH19}. 


We then extend the partial \EFX allocation obtained in Theorem~\ref{thm1:efx} to a complete \EFone allocation and obtain the following result. 
\begin{corollary}\label{thm:EF1intro} Every fair division instance, with subadditive valuations, admits an \EFone allocation with $\NSW$ at least half of the optimal. 
\end{corollary}
\Cref{thm:EF1intro} resolves an open question raised by \citet{GHL23}, where they asked,

	\begin{quote}
		\centering
	\textit{``[For submodular valuations] does there exist an $\EFone$ allocation with high NSW value?''} \\
	\end{quote}

Therefore, we answer this question affirmatively. In fact, our results are even stronger, since they hold with respect to the broader class of subadditive valuations. 

Complementing  \Cref{thm:EF1intro}, we also show that the obtained bound of $1/2$ is tight: there exist fair division instances, with subadditive valuations, wherein no $\EFone$ allocation has $\NSW$ more than half of the optimal.   

In addition, we demonstrate that the compatibility between \EFone and $\NSW$ does not hold 
in general beyond subadditive valuations. 
Specifically, we provide instances with {\it superadditive} valuations where the multiplicative gap between the optimal Nash welfare and the Nash social welfare of every \EFone allocation is unbounded.

We next focus on computational results. 
\begin{theorem}\label{thm:3poly} For subadditive valuations, there exists a polynomial-time algorithm that takes as input arbitrary allocation $\tilA$, and computes (in the value-oracle model) an \EFone allocation $\calA$ with $\NSW$ at least $\frac{1}{e^{2/e}}\approx \frac{1}{2.08}$ times that of $\tilA$. 
\end{theorem}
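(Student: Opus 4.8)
The plan is to read Theorem~\ref{thm:3poly} as a \emph{fairness-repair} procedure: take the arbitrary input $\tilA$ and edit it into an \EFone allocation while surrendering only a constant factor in Nash welfare. First I would dispose of the degenerate case: if some agent $i$ has $v_i(\tilA_i)=0$ then $\NSW(\tilA)=0$, so \emph{any} \EFone allocation meets the bound, and I can simply run envy-cycle elimination from scratch and output the result. Hence assume $v_i(\tilA_i)>0$ for every $i$, so that the target is a complete \EFone allocation $\calA$ with $\prod_i v_i(\calA_i)\ge \tfrac{1}{3^n}\prod_i v_i(\tilA_i)$.

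I would structure the construction in two phases precisely to isolate where Nash welfare can drop. In the \emph{seeding phase} I build a \emph{partial} \EFone allocation $P=(P_1,\dots,P_n)$, where each $P_i$ is a sub-bundle of an input bundle, with $\NSW(P)\ge\tfrac13\,\NSW(\tilA)$, placing all discarded goods into a pool $R$. In the \emph{completion phase} I allocate $R$ on top of $P$ by Lipton--Markakis--Mossel--Saberi envy-cycle elimination, inserting one pooled good at a time and rotating along envy cycles when no unenvied agent exists. Two facts make the completion phase essentially free. It preserves \EFone, since the invariant of the procedure holds as long as the starting partial allocation is \EFone. And it never decreases any agent's value: handing a good to an unenvied agent is monotone, and a cycle rotation gives every agent on the cycle a bundle it strictly preferred. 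Thus $v_i(\calA_i)\ge v_i(P_i)$ for all $i$, whence $\NSW(\calA)\ge\NSW(P)\ge\tfrac13\NSW(\tilA)$, and all loss is charged to seeding. Every operation here is polynomial-time and uses only value queries.

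For the seeding phase I would first reassign the input bundles among the agents via a maximum-weight perfect matching with edge weights $\log v_i(\tilA_j)$. Because the identity assignment is feasible, the resulting allocation $\calB$ satisfies $\NSW(\calB)\ge\NSW(\tilA)$ and is pairwise Nash-stable, i.e.\ $v_i(\calB_i)\,v_j(\calB_j)\ge v_i(\calB_j)\,v_j(\calB_i)$ for all $i,j$, a structural property I would exploit to bound cross-agent envy. I would then convert $\calB$ into a partial \EFone allocation by repeatedly peeling off a bundle that is over-valued by some other agent the part most responsible for the (strong) envy and depositing it into $R$, charging the value lost at each peel through the subadditive halving bound $\max\{v_i(X),v_i(Y)\}\ge\tfrac12 v_i(X\cup Y)$.

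The hard part will be the seeding bookkeeping: I must simultaneously (i) guarantee \EFone for \emph{every} ordered pair after all removals and (ii) keep $\prod_i v_i(P_i)$ within a $3^{-n}$ factor of $\prod_i v_i(\tilA_i)$, and these goals pull against each other, since every peel that quells one agent's envy erodes the owner's own value. The route I foresee is a charging argument that, for each agent, balances three quantities the value it retains, the value of the single good that \EFone is allowed to forgive, and the residual that forces a split so that the worst case multiplies out to $1/3$ rather than the $1/2$ of the existence results (Theorems~\ref{thm1:efx} and~\ref{thm:EF1intro}); this extra factor is exactly the price of restricting to value queries and settling for \EFone in place of \EFX. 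I would also need to certify that the peeling loop terminates in polynomially many rounds by exhibiting a monovariant (such as the count of strong-envy pairs, or the total pooled value) that changes monotonically, so that the full procedure remains polynomial-time.
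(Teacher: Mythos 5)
Your overall scaffolding---dispose of the zero-value degenerate case, build a partial \EFone allocation $P$ of sub-bundles with $\NSW(P)\ge\frac13\NSW(\tilA)$, then complete it by envy-cycle elimination, which preserves \EFone and never lowers any $v_i$---matches the paper's phase structure exactly (the completion phase is verbatim the paper's last step). But the entire content of the theorem lives in the seeding phase, and there your proposal has a genuine gap rather than a proof. Your peeling scheme has no mechanism bounding the \emph{cumulative} loss per agent: when you peel agent $i$'s bundle to quell another agent's strong envy, the halving bound $\max\{v_i(X),v_i(Y)\}\ge\frac12 v_i(X\cup Y)$ controls a single split under $v_i$, but under a subadditive $v_j$ the envier's value need not drop along the same cut, so the retained half can still be strongly envied and must be peeled again; after $k$ peels agent $i$ may retain only a $2^{-k}$ fraction of her value, and nothing in your charging sketch caps $k$ or makes the losses telescope to a constant. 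The asserted ``worst case multiplies out to $1/3$'' is therefore unsupported---you flag the bookkeeping as the hard part, which is precisely to say the proof is missing. The matching preprocessing is also idle: the weights $\log v_i(\tilA_j)$ are undefined when $v_i(\tilA_j)=0$, and the pairwise Nash-stability of $\calB$ is never invoked in any subsequent inequality.

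The paper's proof of Theorem~\ref{thm:PolyEF1} goes bottom-up rather than top-down, and its one key idea is a discretization your plan lacks. Index the goods so that each input bundle $\tildA_i$ is an interval of a line graph $\mathcal{L}$, and grow a partial allocation from empty bundles by repeatedly assigning an inclusion-minimal envied \emph{prefix} of a remaining path, giving priority to the owner $j$ when a prefix of $\tildA_j\cap U$ is envied. Prefix-minimality yields \EFone (drop the last-indexed good); since every assigned bundle is one of the $O(m^2)$ intervals of $\mathcal{L}$ and the updated agent's value strictly increases, the loop terminates in $O(nm^2)$ iterations---this interval restriction is where polynomial time actually comes from. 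The factor $3$ then falls out of a counting argument, not a per-agent halving: the final bundles and the leftover paths cut $\mathcal{L}$ into at most $2n+1$ intervals, so with $S_i$ and $L_i$ denoting the assigned bundles and leftover paths meeting $\tildA_i$, one gets $\sum_{i\in[n]}\left(|S_i|+|L_i|\right)\le 3n$ by counting right endpoints; combined with the termination condition and the exchange bound of Lemma~\ref{lem:S_i}, which give $v_i(\tildA_i)\le\left(|S_i|+|L_i|\right)v_i(\overline{P}_i)$, the AM--GM inequality delivers $\NSW(\overline{\calP})\ge\frac13\NSW(\tilA)$. Note finally that your diagnosis of the constant is off: the gap between $3$ and the existential bound of $2$ is the price of the interval discretization needed for polynomial termination, not of value queries (Algorithm~\ref{alg:Part-EFX} also uses only value queries) nor of settling for \EFone instead of \EFX.
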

The above result serves as a general tool that can be used as a black box to convert any efficient allocation to a fair one, with only a marginal decrease in efficiency. To highlight this point, we note that  Theorem~\ref{thm:3poly}  can be used in conjunction with the approximation algorithm of \citet{DLRV24}, that achieves a constant-factor approximation for $\NSW$ under subadditive valuations.\footnote{The algorithm of \citet{DLRV24} requires demand-oracle access to the valuations, which is a stronger requirement than the standard value queries utilized in this work.} Hence, using demand queries for subadditive valuations, we can find, in polynomial-time, an $\EFone$ allocation that provides a constant-factor approximation for the optimal $\NSW$. Prior to this work, the best-known approximation factor for $\NSW$, under $\EFone$ and among $n$ agents, was $O(n)$ -- Theorem~\ref{thm:3poly} improves this bound to $O(1)$.

\ms{Another interesting application of our approach is in constrained fair division. In particular, we establish that, for subadditive valuations, joint fairness and efficiency guarantees hold even under arbitrary downward-closed constraints.}

\begin{theorem} 
Every fair division instance with subadditive valuations and a downward-closed constraint admits a feasible (partial) allocation that is $\EFX$ and achieves a $\NSW$ of at least half of the optimal feasible allocation.
\end{theorem}

An additional, relevant consequence of our results is the compatibility between $\EFone$ and $\frac{1}{2}$-$\PO$, under subadditive valuations. \ms{Furthermore, this compatibility extends even to settings with arbitrary downward-closed constraints.}  Prior work of \citet{CKM16} showed that $\EFone$ and (exact) $\PO$ are incompatible for subadditive valuations.\footnote{\ms{They also show that MEF1 (marginal \EFone) and $\PO$ are compatible for submodular valuations} }  
 Complementing this negative result, we establish that $\EFone$ is in fact compatible with $\frac{1}{2}$-$\PO$, under subadditive valuations. To the best of our knowledge, this is the first positive result on the compatibility of \ms{exact}  $\EFone$ and $\PO$ (albeit in an approximate sense) that goes beyond additive valuations and holds without any restricting assumptions, such as a limited number of items and agents or fixed number of marginal values of the goods. 

\subsection{Our Techniques} 

\noindent\textbf{Achieving a partial \EFX allocation with High NSW.} Our approach, which achieves partial allocation while maintaining \EFX and attaining a high $\NSW$, differs significantly from existing approaches for achieving high NSW (e.g., \citet{CGH19}, \citet{GHL23}, \citet{FMP24}). These methods typically start with the Nash optimal allocation and use a matching-based algorithm on the $\EFX$ feasibility graph to carefully remove items from the over-demanded bundles.

In contrast, we start from an empty allocation and incrementally build the allocation while maintaining EFx along with additional invariants.  The  $\EFX$ property employs the ``most envious agent'' concept introduced by \citet{CKMS20}.  Two technical insights, in particular, which we employ are: (i) In the algorithm, swaps are performed only within the optimal bundles, rather than across the set of all unallocated goods, and (ii) The algorithm maintains the invariant that, if some agent $j$ is allocated items from agent $i$'s $\NSW$ optimal bundle $A_i^*$, then $i$ is envy-free towards $j$. These technical insights play a significant role in establishing the welfare guarantee. Also, unlike the additive setting, where it is possible to ensure that each agent receives at least half the value they obtain from the Nash optimal solution, such guarantees may not be feasible in the subadditive setting. Consequently, our analysis adopts an aggregate approach. Rather than aiming for individual guarantees, we focus on lower bounding the geometric mean of values derived from the EFx partial allocation, yet another  departure from prior work. 

\medskip
\noindent\textbf{Computing \EFone allocation with High NSW.}  
To obtain our computational results for \EFone (\Cref{thm:3poly}), we consider the items as vertices of a graph, where swaps between subsets of unallocated goods are allowed only if they form a connected component in the graph.
\ms{
This ensures that each agent's partial allocation is a connected component. The connected components are then grown using carefully designed tie-breaking rules to determine which agent's bundle to update while maintaining $\EFone$.  Similar ``interval growing''  ideas have been applied for allocation of indivisible items \cite{BBN20}, and in cake cutting \cite{BarmanK23, ABK19cake}. A key difference between our approach and existing ones lies in the graph structure: we use disjoint cycles instead of a single path. This captures the insight that, instead of treating the grand bundle as a monolithic path, it is useful to consider the optimal bundles separately as cycles. This enables us to obtain structural guarantees and develop novel counting arguments to achieve our approximation results.
 }

\medskip
\noindent\textbf{Outline.} Theorem \ref{thm:3poly} (restated formally as Theorem~\ref{thm:PolyEF1}) is established in Section \ref{section:ef1-algo}. We prove the existential results---in particular, Theorem \ref{thm1:efx} and \Cref{thm:EF1intro} (restated as Theorems \ref{thm:existsEFX} and \ref{thm:existEF1}, respectively)---in Section \ref{section:efx-exist}. In \Cref{sec:constraint}, we explore the implications of our results for the problem of fair and efficient allocation with constraints.   


\subsection{Additional Related Work}
\textit{Axiomatic Properties of $\NSW$.} The fairness and efficiency properties of Nash social welfare are well-studied in economics \citep{Moulin04,Varian73,KN79}. In particular, $\NSW$ satisfies the Pigou-Dalton principle. Another notable property of $\NSW$ is that it scale free, i.e., if an agent scales her valuation multiplicatively, independent of others, it does not change the order in which all the allocations are ranked via $\NSW$~\cite{Moulin04}. \citet{Suks23} showed that, for additive valuations, $\NSW$ is the only welfarist rule which satisfies \EFone. 

\medskip

\noindent
\textit{Approximation Algorithms for $\NSW$ maximization.} Maximizing Nash social welfare is {\rm NP}-hard even when  agents have identical, additive valuations. 
Given its significance in discrete fair division, considerable effort has been dedicated to developing approximation algorithms for $\NSW$ maximization  
\citep{CG15,CDGJMVY17,LiV21b,LiV21a,GargHV21}. For additive valuations, the best known approximation guarantee is $e^{1/e} \approx 1.44$ \cite{BKV18}.  
For submodular valuations, the best known approximation ratio is $4$ \cite{GHL23}. Recently, in a notable result, \citet{DLRV24} gave a constant factor ($\approx 375000$) approximation algorithm for $\NSW$ maximization under subadditive valuations. 

\medskip

\noindent
\textit{\EFX Allocations.} As mentioned previously, the existence of complete allocations (i.e., allocations in which all the goods have been assigned among the agents) that are $\EFX$ is a central open problem in discrete fair division~\cite{procaccia2020technical}. \citet{PlautR20} showed the existence of \EFX allocations among agents with identical valuations. They also established the existence of $\frac{1}{2}$-\EFX allocations under subadditive valuations. \citet{CKMS20} showed that, among $n$ agents, partial \EFX allocations exist under general monotone valuations, with an upper bound of  $n-1$ on the number of unallocated goods.  Subsequently, the bound on the number of unallocated goods was improved to $n-2$ \citep{BCFF22,Mahara24}. 

\medskip

\noindent
\textit{Fair and Efficient Allocations.} For additive valuations, the existence of fair and efficient allocation---in terms of \EFone and \PO---was established in \cite{CKM16}. \citet{BKV18} showed that, for additive valuations, \EFone and \PO allocations can be computed in pseudo-polynomial time. 


For subadditive valuations, \citet{CGM21} gave an $O(n)$ approximation to maximum Nash welfare that satisfies \EFone and either of the guarantees: $\frac{1}{2}$-\EFX or partial \EFX with at most  $n-1$  unallocated goods. In a recent work, for subadditive valuations, \citet{GHL23} showed the existence of $\frac{1}{2}$-\EFX allocations that approximate the optimal $\NSW$ to within a factor of $\frac{1}{2}$. Subsequently, \citet{FMP24} improved this bound by proving the existence of $\frac{1}{2}$-\EFX allocations with $\NSW$ at least $\frac{2}{3}$ times the optimal. \\

\noindent\textit{Price of Fairness.}  The price of fairness quantifies the loss of collective welfare incurred to ensure fairness \cite{bertsimas2011price,CKK09}. \citet{BLM19} initiated the study of the price of fairness in the context of indivisible item allocation, focusing on utilitarian social welfare. When the fairness notion studied is \EFone, \citet{BBN20} established a tight bounds on the price of fairness for utilitarian social welfare.  \ms{For \EFX,  \cite{bu2025approximability} provides bounds on the price of fairness for utilitarian social welfare and also study the complexity questions regarding utilitarian welfare maximization under fairness constraints. 
	Note that our results can be interpreted as showing that the price of fairness for $\EFone$ with respect to $\NSW$ is exactly $1/2$, for subadditive valuations.
  }

\section{Notation and Preliminaries}
We study fair division of $m \in \mathbb{Z}_+$ indivisible goods among $n \in \mathbb{Z}_+$ agents with individual preferences over the goods. We will denote the set of all goods and the set of all agents as $[m] =\{1, 2, \ldots, m\}$ and $[n] = \{1, 2, \ldots, n\}$, respectively. The cardinal preferences of the agents $i \in [n]$ over the goods are captured by valuations (set functions) $v_i: 2^{[m]} \mapsto \mathbb{R}_+$. In particular, $v_i(S) \in \mathbb{R}_+$ denotes the (nonnegative) value that agent $i \in [n]$ has for any subset of goods $S \subseteq [m]$. We will denote a fair division instance by the tuple $\langle [n], [m], \{v_i\}_{i=1}^n \rangle$. Also, for ease of notation, to denote the value that agent $i \in [n]$ has for a good $g \in [m]$, we will write $v_i(g)$ (instead of $v_i(\{g\})$). 

Addressing division of goods (non-negatively valued resources), the current work assumes that agents $i \in [n]$ valuations are (i) nonnegative: $v_i(S) \geq 0$, for all $S \subseteq [m]$, (ii) normalized: $v_i(\emptyset) = 0$, and (iii) monotone: $v_i(X) \leq v_i(Y)$, for all $X \subseteq Y \subseteq [m]$. 

A set function $v$ is said to be subadditive if it satisfies $v(S \cup T) \leq v(S) + v(T)$, for all subsets $S, T \subseteq [m]$. Subadditive valuations model complement-freeness and constitute the broadest function class in the well-studied hierarchy of complement-free valuations. Recall that subadditive functions include additive, submodular, and XOS valuations as special cases. \ms{Consequently, the results in this work apply broadly by establishing the compatibility between fairness and efficiency for all the mentioned complement-free valuation classes.} 


An allocation $\calA = (A_1, \ldots, A_n)$ refers to an $n$-partition of the set of goods $[m]$. Here, the subset of goods $A_i \subseteq [m]$ is assigned to agent $i \in [n]$ and is called agent $i$'s bundle. Note that, in an allocation $\calA = (A_1, \ldots, A_n)$, the bundles are pairwise disjoint ($A_i \cap A_j = \emptyset$ for all $i \neq j$) and satisfy $\cup_{i=1}^n A_i = [m]$. We will use the term partial allocation $\calP = (P_1, \ldots, P_n)$ to denote an assignment of goods among the agents, with pairwise disjoint bundles $P_i \subseteq [m]$ that might not contain all the $m$ goods, i.e., $\cup_{i=1}^n P_i \subsetneq [m]$. In addition, we will use the term complete allocation to highlight divisions wherein all the goods have been necessarily assigned.

The notions of fairness and welfare studied in this work are defined next. 
\begin{definition}[Envy-free up to one good]
In a fair division instance $\langle [n], [m], \{v_i\}_{i=1}^n \rangle$, an allocation $\calA = (A_1, \ldots, A_n)$ is said to be envy-free up to one good ($\EFone$) if for every pair of agents $i, j \in [n]$ (with $A_j \neq \emptyset$), there exists a good $g \in A_j$ such that $v_i(A_i) \geq v_i(A_j \setminus \{g\})$. 
\end{definition}

A strengthening of the $\EFone$ criterion is obtained by demanding that envy between agents can be mitigated by the removal of \emph{any} good from the envied agent's bundle. The existence of $\EFX$ allocations, even under additive valuations, stands as a major open problem in discrete fair division. Multiple, recent works have also considered multiplicative approximations of this fairness notion. Formally, 

\begin{definition}[Envy-Free Up to Any Good ($\EFX$)]
For fair division instance $\langle [n], [m], \{v_i\}_{i=1}^n \rangle$ and parameter $\alpha \in [0,1]$, an allocation $\calA=(A_1, \ldots, A_n)$ is said to be an $\alpha$-$\EFX$ allocation if for all pairs of agents $i, j \in [n]$ (with $A_j \neq \emptyset$) and every good $g \in A_j$ we have $v_i(A_i) \geq \alpha \ v_i(A_j \setminus \{g\})$. 
\end{definition}
Here, $\alpha =1$ corresponds to exact $\EFX$ allocations. 

In a fair division instance $\langle [n], [m], \{v_i\}_{i=1}^n \rangle$, the Nash social welfare ($\NSW$) for an allocation $\calA = (A_1, \ldots, A_n)$---denoted as $\NSW(\calA)$---is defined as the geometric mean of the values that the agents obtain under the allocation 
$\NSW(\calA) \coloneqq \left( \prod_{i=1}^n v_i(A_i) \right)^{1/n}$.
In any given instance, we will write $\calA^*=(A^*_1, \ldots, A^*_n)$ to denote an allocation with maximum possible $\NSW$ among all allocations.


Our algorithms operate in the standard value oracle model. That is, in terms of an input specification, the algorithms only require an oracle, which when given any agent $i \in [n]$ and subset $S \subseteq [m]$, returns the value $v_i(S) \in \mathbb{R}_+$.

\section{Finding Fair Allocations with High Nash Social Welfare}
\label{section:ef1-algo}   
This section provides a polynomial-time algorithm (Algorithm \ref{alg:PolyIGC}) which, given an arbitrary allocation $\tilA$ as input, returns an $\EFone$ allocation with $\NSW$ at least $\frac{1}{e^{2/e}} \approx \frac{1}{2.08}$ times that of $\tilA$.  

In Algorithm \ref{alg:PolyIGC}, we will index the goods considering the given allocation $\tilA=(\widetilde{A}_1, \ldots ,\widetilde{A}_n)$. Specifically,  in each bundle $\widetilde{A}_i$, let the goods be indexed as $g^i_1, g^i_2, \ldots, g^i_{|\widetilde{A}_i|}$. Further, we construct a graph $\mathcal{G}$ containing (disjoint) $n$ cycles, one for each bundle $\widetilde{A}_i$; see Figure~\ref{fig:G}. In particular, the $i$th cycle is over $|\widetilde{A}_i|$ vertices that correspond the goods $g^i_j$ respectively. Note that any connected component $U$---in the cycle for $\widetilde{A}_i$---can be identified as $\{g^i_{x}, g^i_{x+1}, \ldots, g^i_{y} \}$, where the indices $x,\ldots,y$ are considered modulo $|\widetilde{A}_i|$.

\begin{figure}[h]
	\centering
	\includegraphics[width=0.7\textwidth]{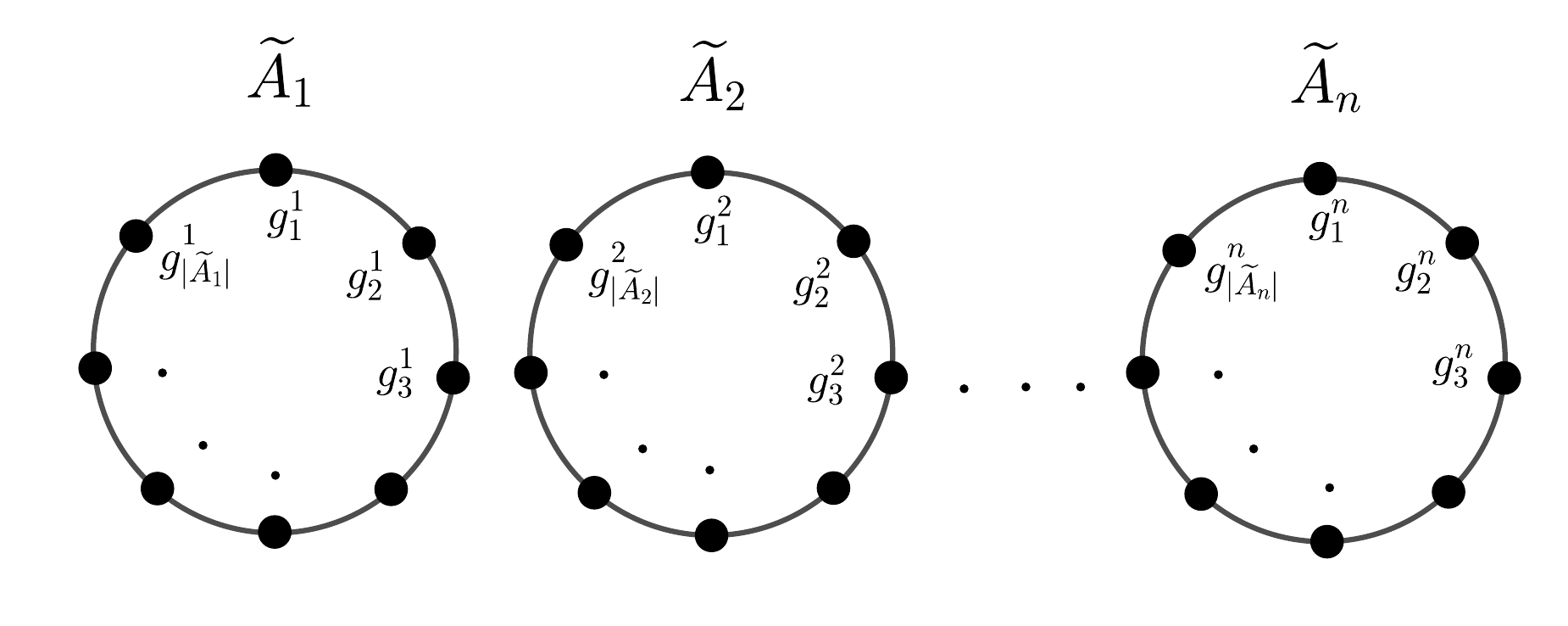}
		\caption{The graph $\mathcal{G}$ corresponding to allocation $\tildA$}\label{fig:G}
\end{figure}

{\centering
	\begin{minipage}[t]{\linewidth}
		\begin{algorithm}[H]
			\caption{Polynomial-Time Path Growing}
			\label{alg:PolyIGC}
			\DontPrintSemicolon
			
			\KwIn{ A fair division instance $\calI= \langle [n], [m], \set{v_i}_{i \in [n]} \rangle$ with value oracle access to the subadditive valuations $v_i$'s.  Allocation $\tilA=(\widetilde{A}_1,...,\widetilde{A}_n)$. }
			\KwOut{ \EFone allocation with high NSW.}
			
			For each agent $i\in [n]$,  initialize $P_i=\emptyset $. 
			
			Considering $\tilA$, construct  graph $\mathcal{G}$ ($n$ disjoint cycles) as detailed above.  
			
			
			Let $\mathcal{U}(\calP)$ denote the collection of connected components (paths or cycles) in $\mathcal{G}$ that remain after the removal of $\bigcup_{i\in [n]} \ P_i$.
			
			\While{there exists an agent $i\in [n]$ and a path or cycle $U \in \mathcal{U}(\calP)$ such that  $v_i(P_i)< v_i(U)$ }
			{ 
				Let $1 \leq s \leq |U|$ be the smallest index such that $E(s, U, \calP )\neq \emptyset$; the while-loop condition ensures that such an index $s$ exists. 
				
				Also, let ${j} \in [n]$ be the index such that connected component $U$ is contained in the cycle for bundle $\widetilde{A}_j$, i.e., $U = \{g^j_a, \ldots, g^j_b\}$ with $a,b \in \{1, \ldots, |\widetilde{A}_j|\}$.
				
				
					
					\uIf{ agent  $j \in E(s, U, \calP ) $}
					{Update $P_j = \{ g^j_{a}, \ldots, g^j_{a+s-1} \}$, while keeping partial assignment of all other agents unchanged.
					}
					\Else
					{Select any agent $k\in E(s, U, \calP )$ and update $P_k= \{ g^j_{a}, \ldots, g^j_{a+s-1} \}$, while keeping partial assignment of all other agents unchanged.
					}
					

				%
				%

				Update $\mathcal{U}(\calP)$ to be the collection of connected components in the graph $\mathcal{G}$ that remain after the removal of $\bigcup_{i\in [n]} \ P_i$.
			}
			
			Extend the partial ($\EFone$) allocation $\calP=(P_1,...,P_n)$ to a complete allocation $\calA=(A_1,...,A_n)$ using the envy-cycle elimination algorithm of \citet{LMM04} (this extension ensures that $v_i(A_i) \geq v_i(P_i)$ for all $i$). 
			
			\Return Allocation $\calA$ 
		\end{algorithm}
	\end{minipage}
}


Also, for any set of goods $U = \{g^i_x, g^i_{x+1}, \ldots, g^i_{y-1}, g^i_y\}$ with consecutive indices\footnote{As mentioned, the indices are considered modulo $|\widetilde{A}_i|$} (i.e., for any connected component in $\mathcal{G}$) and an index $1 \leq s \leq |U|$, we define $E(s, U, \calP)$ to be the subset of agents that envy the first $s$ goods in $U$ under partial allocation $\calP=(P_1, \ldots, P_n)$; formally,  envying agents $E(s, U, \calP) \coloneqq \big\{ r \in [n] \mid v_r(P_r) < v_r(\{g_x, \ldots, g_{x+s-1} \})  \big\}$. Note that this set can be empty. 

Algorithm \ref{alg:PolyIGC} proceeds by iteratively updating bundles of the agents with envied sets. The envied sets are selected judiciously to ensure that the $\EFone$ criterion is maintained throughout. Moreover, the algorithm in its while-loop always assigns bundles that induce connected components (paths or cycles) in the graph $\mathcal{G}$. At a high level, this design decision (i.e., the focus on bundles that form paths or cycles in $\mathcal{G}$) ensures that at most polynomially many bundles can be considered for allocation. This construction essentially leads to polynomial-time termination of the algorithm. The properties of the allocation computed by Algorithm \ref{alg:PolyIGC} are stated in Theorem \ref{thm:PolyEF1}. 

\begin{theorem}\label{thm:PolyEF1} 
	Let $\mathcal{I} = \langle [n], [m], \{ v_i \}_{i=1}^n \rangle$ be a fair division instance in which the agents' valuations $v_i$s are subadditive. Given value oracle access to $v_i$s and an allocation $\tilA$, we can compute, in polynomial time, a complete $\EFone$ allocation $\calA$ with the property that 
	$$\NSW(\calA)\geq \frac{1}{e^{2/e}}\NSW(\tilA).$$
\end{theorem}

\begin{proof}
	Let  $\overline{\calP}=( \overline{P}_1, \ldots, \overline{P}_n)$ be the partial allocation obtained in Algorithm~\ref{alg:PolyIGC} when its while-loop terminates. Note that in each iteration of the while-loop the valuation of the selected agent, $j$ or $k$, strictly increases and the valuations of the remaining agents remain unchanged.  This observation implies that the loop necessarily terminates, i.e., the partial allocation $\overline{\calP}$ is guaranteed to exist. We will show that $\overline{\calP}$ is $\EFone$ and upholds the stated $\NSW$ approximation guarantee. 
	
	Recall that the envy-cycle elimination algorithm of \citet{LMM04} extends a partial $\EFone$ allocation to a complete $\EFone$ allocation without decreasing the valuation of any agent. Hence, the $\EFone$ and Nash welfare guarantees (established below) of partial allocation $\overline{\calP}$ continue to hold for the returned, complete allocation $\calA=(A_1, \ldots, A_n)$.
	
	The partial allocation $\overline{\calP}$ is $\EFone$: In each iteration of the while-loop, the minimality of the selected index $s$ ensures that, for each agent $i \in [n]$ and against the selected agent $\ell \in \{j, k\}$, envy-freeness holds up to the removal of largest-index good $g^j_{a+s-1}$ from $P_\ell = \{g^j_a, \ldots, g^j_{a+s-1} \}$. Hence, inductively, we get that the partial allocation $\overline{\calP}$ is $\EFone$. 
	
	We now focus on the $\NSW$ approximation guarantee for $\overline{\calP}=( \overline{P}_1, \ldots, \overline{P}_n)$. First, note that by the termination condition of the while loop, we have $v_i(\overline{P}_i) \geq v_i(U)$ for all $i\in [n]$ and for each $U\in \mathcal{U}(\overline{\calP})$. 
	
	By construction, each $\overline{P}_r$ and every $U \in \mathcal{U}(\overline{\calP})$ correspond to a connected component in exactly one of the $n$ cycles that constitute the graph $\mathcal{G}$. That is, each $\overline{P}_r$ and $U \in \mathcal{U}(\overline{\calP})$ is contained in exactly one of the bundles among $\widetilde{A}_1, \ldots, \widetilde{A}_n$.  With this observation, for each agent $i \in [n]$, write $S_i$ and $L_i$ to denote the collection of bundles in $\overline{\calP}$ and  $\mathcal{U}(\overline{\calP})$, respectively, that are contained in $\tildA_i$. Specifically, write $L_i \coloneqq \{  U \in \mathcal{U}(\overline{\calP}) \mid U \subseteq \tildA_i  \}$.  Further, let $S_i \coloneqq \{ r \in [n] \mid \overline{P}_r \subseteq \tildA_i  \}$\footnote{\ms{ We can assume, without loss of generality, that $\tildA_i\neq \emptyset$ for all $i\in [n]$, and hence $\overline{P}_r$s are non-empty. } }.  Note that the sets $S_i$'s form a partition of $[n]$.  That is, $S_i \cap S_j =\emptyset$, for all $i \neq j$, and  $\bigcup_{i\in [n]} S_i= [n]$. 
	
	Lemma~\ref{lem:S_i} (stated and proved below) shows that, for each $i \in [n]$ and $r \in S_i$,  we have $v_i\left(\overline{P}_r \right) \leq v_i(\overline{P}_i)$. 
	
	Next, we note that $\tildA_i=\left(\bigcup\limits_{U\in L_i} U  \right) \cup \left( \bigcup\limits_{r\in S_i} \overline{P}_r \right)$ and invoke the subadditivity of the valuation $v_i$ to obtain  
	\begin{align} 
		v_i(\tildA_i) & \leq   \sum_{U\in L_i} v_i(U  ) + \sum_{r\in S_i} v_i\left(\overline{P}_r  \right) \nonumber \\
		& \leq  |L_i| v_i(\overline{P}_i)  + \sum_{r\in S_i} v_i\left(\overline{P}_r  \right)  \tag{since $ v_i(U)\leq  v_i(\overline{P}_i)$} \\ 
		&\leq  |L_i| v_i(\overline{P}_i)+ |S_i| v_i(\overline{P}_i)  \tag{Lemma~\ref{lem:S_i} } \\ 
		&= (|L_i|+|S_i|)v_i(\overline{P}_i) \label{eq: optpoly1}
	\end{align}

	We will extend inequality (\ref{eq: optpoly1}) using the following claim.  
	
	\medskip
	\begin{claim}
		\label{claim:intersection-count}
		Across agents $i \in [n]$, with nonempty $S_i$, the sum of the counts $|S_i| + |L_i|$ is upper bounded as follows: $\sum\limits_{i\in [n] \ : \ |S_i|>0 \  } \left( |S_i| +|L_i| \right) \leq 2n$. Furthermore, 
		\begin{align*}
		\left( \prod_{i \in [n]: \ |S_i| >0 } (|L_i|+|S_i|) \right)^{\frac{1}{n}} \leq e^{2/e}.  
		\end{align*}
	\end{claim}
	\begin{proof}[Proof of Claim]  
		By construction, each of the $n$ bundles $\overline{P}_i$ induces a connected component in the graph $\mathcal{G}$. Further, for any $i\in [n]$ with $|S_i|>0$, we have that $|L_i|\leq |S_i|$; this inequality follows from the observation that after removing $|S_i|>0$ connected components from a cycle at most $|S_i|$ connected components remain.  Therefore, 		
		\begin{align} 
			\sum_{i\in [n]: \ |S_i|>0} \left(|S_i| +|L_i| \right)  \leq \sum_{i\in [n]:  \ |S_i|>0} 2 |S_i| =    2n \label{ineq:intersection-count}
		\end{align} 
			Here, the last  equality follows from the fact that $S_i$'s form a partition of $[n]$, i.e., $\sum_{i\in [n]}|S_i| =n$
			
Next, write $\ell$ to denote the number of agents with a nonempty $S_i$, i.e., $ \ell \coloneqq | \{i \in [n] : |S_i| > 0 \}|$. For the geometric mean of the $\ell$ counts $(|S_i| + |L_i|)$, we have 
\begin{align*}
\left( \prod_{i \in [n]: \ |S_i| >0 } (|S_i| + |L_i| ) \right)^\frac{1}{\ell} \underset{\text{AM-GM ineq.}}{\leq} \left( \frac{1}{\ell} \sum_{i : \ |S_i| > 0}  (|S_i| + |L_i| ) \right) \underset{\text{via (\ref{ineq:intersection-count})}}{\leq} \frac{2n}{\ell}. 
\end{align*}
Therefore, 
\begin{align*}
\left( \prod_{i \in [n]: \ |S_i| >0 } (|L_i|+|S_i|) \right)^{\frac{1}{n}} \leq  \left( \frac{2 n}{\ell} \right)^{\frac{\ell}{n}} \leq e^{2/e}. 
\end{align*}
The last inequality is obtained by observing that, in the domain $x \in (0,1]$, the maximum value of the function $f(x) \coloneqq \left( \frac{2}{x} \right)^{x}$ is equal to $e^{2/e}$; the maximum is achieved at $x = \frac{2}{e}$. Hence, $\left( \frac{2 n}{\ell} \right)^{\frac{\ell}{n}} \leq f \left( \frac{2}{e} \right)  = e^{2/e}$. This completes the proof of the claim.  
\end{proof}

It is relevant to note that for any agent $z \in [n]$, if $S_{z}$ is empty, i.e., $|S_{z}| =0$, then the entire cycle (in the graph $\mathcal{G}$) for the bundle $\widetilde{A}_{z}$ is contained in $\mathcal{U}(\overline{\calP})$. Hence, for any such agent $z$, with $|S_z| =0$, the termination condition of the while-loop gives us 
\begin{align}
v_{z}(\widetilde{A}_{z}) \leq v_{z} ( \overline{P}_{z}) \label{ineq:zero-agents}
\end{align} 
	
	To establish the desired bound on the NSW of $\overline{\calP}$, we note that 
	\begin{align}
		\left( \prod_{i\in [n]} v_i(\tildA_i) \right)^{\frac{1}{n}} &= \left( \prod_{i\in [n] : \ |S_i|>0} v_i(\tildA_i) \right)^{\frac{1}{n}} \left( \prod_{i\in [n] : \ |S_i|=0} v_i(\tildA_i) \right)^{\frac{1}{n}} \nonumber \\ 
		&\leq  \left( \prod_{i\in [n] : \ |S_i|>0} v_i(\tildA_i) \right)^{\frac{1}{n}} \left( \prod_{i\in [n] : \ |S_i|=0} v_i(\overline{P_i}) \right)^{\frac{1}{n}} \tag{via (\ref{ineq:zero-agents})} \\  
		 &\leq  	\left( \prod_{i\in [n]: \ |S_i|>0} (|L_i|+|S_i|)  \ v_i( \overline{ P_i}) \right)^{\frac{1}{n}}  \left( \prod_{i\in [n] : \ |S_i|=0} v_i(\overline{P_i}) \right)^{\frac{1}{n}} \tag{via (\ref{eq: optpoly1})}\\ 
		&= \left( \prod_{i\in [n]: \ |S_i|>0} (|L_i|+|S_i|)\right)^{\frac{1}{n}}   \left( \prod_{i\in [n]} v_i(\overline{ P_i}) \right)^{\frac{1}{n}}  \nonumber \\ 
		&\leq e^{2/e}  \NSW(\overline{\calP}) \tag{Claim \ref{claim:intersection-count}}
	\end{align}

Therefore, the partial $\EFone$ allocation $\overline{\calP}$ satisfies the stated NSW bound. 
	
	To complete the proof of the theorem it remains to bound the time complexity of Algorithm~\ref{alg:PolyIGC}. 
	
	\medskip
	
	\noindent\textbf{Runtime Analysis.} 
	Recall that the algorithm of \citet{LMM04} runs in polynomial time. Hence, to establish the computational efficiency of Algorithm~\ref{alg:PolyIGC}, it suffices to show that the while-loop in the algorithm terminates in polynomial time.  
	
	We note that in the graph $\mathcal{G}$ (defined as a disjoint union of $n$ cycles over $m$ vertices/goods) there are at most $O(n m^2)$ connected components. Also, throughout the execution of the while-loop, the assigned bundles $P_i$ correspond to one of the $O(n m^2)$ connected components. Furthermore, as the loop iterates, the valuations of the agents increase monotonically. These observations imply that, for any agent $i \in [n]$, the assigned bundle $P_i$ can be updated at most  $O(n m^2)$ times. Since each iteration of the while-loop updates the bundle of one agent, the loop must terminate after $O(n ^2 m^2)$  iterations. Observing that each iteration of the loop executes in polynomial time, we obtain that the while-loop finds the partial allocation $\overline{\calP}$ in polynomial time. As mentioned above, the computational efficiency of the while-loop implies that Algorithm~\ref{alg:PolyIGC} overall runs in polynomial time.
	
	In summary, Algorithm~\ref{alg:PolyIGC} efficiently finds an $\EFone$ allocation $\calA$ with high NSW. The theorem stands proved. 
\end{proof}

\begin{lemma}\label{lem:S_i} Let  $\overline{\calP}=( \overline{P}_1, \ldots, \overline{P}_n)$ be the partial allocation obtained in Algorithm~\ref{alg:PolyIGC} when its while-loop terminates. Also, let $S_i \coloneqq \{ r \in [n] \mid \overline{P}_r \subseteq \tildA_i \}$. Then,   
	\begin{align}\label{ineq: envy}
		v_i (\overline{P_i}) \geq v_i (\overline{P_r} ) \qquad \text{for each } i\in [n] \text{ and }r \in S_i 
	\end{align}
\end{lemma}
\begin{proof} 
	We show, by a contradiction, that the stated bound holds in every iteration of the while-loop. Since the initial partial allocation consists of empty bundles, equation (\ref{ineq: envy}) holds trivially. Now, consider the iteration of the while-loop wherein equation (\ref{ineq: envy}) is violated for the first time. Denote $\calP'$ and $\calP''$ as the partial allocations maintained at the beginning and the end of the iteration, respectively. Analogously, we denote  $S'_i \coloneqq \{ r \in [n] \mid P'_r \subseteq \tildA_i  \}$ and $S''_i \coloneqq \{ r \in [n] \mid P''_r \subseteq \tildA_i\}$, for each $i\in [n]$ . We have that $v_i (P'_i) \geq v_i (P'_r)$  for each  $i\in [n]$  and  $r \in S'_i $. Let $\ell \in [n]$ be the agent whose bundle got updated during the iteration. Note that $\ell$ is either the identified agent $j$ or selected as $k$. Also, let $X$ be the new bundle she receives, i.e., $P''_\ell  = X$. Since in each iteration of the while-loop only one agent's bundle changes, we have that $P''_i = P'_i$ for all agents $ i \neq \ell$. It then follows that $S''_i \subseteq S' _i \cup \{ \ell \} $ for each $i\in [n]$. Now, observe that, for the selected agent $\ell$ and each $r \in S''_\ell \setminus \{ \ell \}$, we have 
	\begin{align*}
		v_\ell(P''_\ell) > v_\ell(P'_\ell) & \geq v_\ell(P' _r)  =  v_\ell(P'' _r).
	\end{align*}
Note that the inequality $v_\ell(P''_\ell) \geq  v_\ell(P'' _r)$ extends to all $r \in S''_\ell$, since this bound holds trivially if $r = \ell$.

	In addition, $v_i(P''_i) = v_i(P'_i) \geq v_i(P'_r)  =  v_i(P'' _r)$ for each $i \neq \ell$ and $r \in S''_i \setminus \{ \ell \}$. Hence, for $\calP''$ to violate equation (\ref{ineq: envy}),  there must exist an agent $j \in [n]\setminus \{ \ell \}$ such that $v_j(P''_j)= v_j(P'_j)< v_j(P''_\ell)$ (and the selected agent $\ell \in S''_j$). It follows that, for  index $j$ (see Line~6), $P''_\ell \subseteq U \subseteq \tildA_j$, here, $U$ is the connected component identified in the while-loop condition. Further, with $s$ as the index identified in Line~5, we have $P''_\ell = \{ g^j_{a}, \ldots, g^j_{a+s-1} \}$ and, hence, $j, \ell \in E(s,  U, \calP')$. However, by the \textsc{If}-condition in Line~7, we get that $j$ must have been the agent whose bundle got updated, as opposed to agent $\ell=k$. This contradicts the assumption that $\ell$'s bundle got updated in the current iteration. Therefore, in each iteration of the while-loop equation (\ref{ineq: envy}) is satisfied. Hence, the allocation $\overline{\calP}$ obtained at the termination of the while-loop also satisfies equation (\ref{ineq: envy}). This completes the proof of the lemma. 
\end{proof}

\begin{remark} A useful consequence of Theorem \ref{thm:PolyEF1} is that, given any $\alpha$-approximation algorithm for $\NSW$, we can compute, in polynomial time, an $\EFone$ allocation that approximates $\NSW$ within a factor of $\frac{1}{\alpha e^{2/e}} $.

\smallskip

At first glance, the construction utilized in Theorem \ref{thm:PolyEF1} may appear simplistic since the algorithm finds local improvements (whenever possible) while maintaining fairness as an invariant within a seemingly arbitrarily imposed graph structure. 
However, it is relevant to note that the approximation guarantee obtained in Theorem \ref{thm:PolyEF1} is quite close to the best possible: Theorem \ref{theorem:ef1-lowerb} (Section \ref{section:ef1-lowerb}) shows that, even in terms of an existential guarantee, one cannot always expect $\EFone$ allocations with $\NSW$ more than $1/2$ times  the optimal.
\end{remark}

\section{Existence of High-Welfare $\EFX$ Allocations under Subadditive Valuations}
\label{section:efx-exist}

This section establishes, for subadditive valuations, the universal existence of partial allocations that are both $\EFX$ and have $\NSW$ at least half of the optimal. Our proof here is constructive: we develop an algorithm (Algorithm \ref{alg:Part-EFX}) which, given a Nash optimal allocation $\calA^*$, goes on to find a partial $\EFX$ allocation $\overline{\calP}$ with only a marginal decrease in Nash social welfare.

{\centering
	\begin{minipage}[t]{\linewidth}
		\begin{algorithm}[H]
			\caption{$\EFX$ Set Growing}
			\label{alg:Part-EFX}
			\DontPrintSemicolon
			
			\KwIn{A fair division instance $\calI= \langle [n], [m], \set{v_i}_{i \in [n]} \rangle$ with value oracle access to the subadditive valuations $v_i$'s. A Nash optimal allocation $\calA^*=(A^*_1,...,A^*_n)$.}
			\KwOut{Partial $\EFX$ allocation with high NSW.}
			
			For each $i \in [n]$, initialize the assigned bundle $P_i = \emptyset$ and the set of unassigned goods $U_i = A^*_i$.
			
			\While{there exist subset $U_j \subseteq [m]$ and agent $i\in [n]$ such that  $v_i(P_i)< v_i(U_j)$ } 
			{ 
				 Let $X \subseteq U_j$ be an inclusion-wise minimal set with the property that $v_k(P_k) < v_k(X)$, for some $k \in [n]$. 
				
				\eIf{$v_j(P_j) < v_j(X)$}
				{
					Update $P_j=X$, while keeping partial assignment of all other agents unchanged.
				}
				{	Select an arbitrary agent $k \in [n]$ for whom $v_k(P_k) < v_k(X)$. 
					
					Update $P_k=X$, while keeping partial assignment of all other agents unchanged. 
				} 	
				Update $U_i = A_i^*\setminus \left( \bigcup_{r \in [n]} P_r \right)$, for each $i\in [n]$.  
			} 
			
			
				
				
			
			
			
			\Return Partial allocation $\calP=(P_1, \ldots, P_n)$. 
		\end{algorithm}
\end{minipage}
}

\begin{theorem}\label{thm:existsEFX} Every fair division instance, with subadditive valuations, admits a partial $\EFX$ allocation $\overline{\calP}$ with Nash social welfare $\NSW(\overline{\calP}) \geq \frac{1}{2}  \NSW(\mathcal{A}^*).$ 	
	Here, $\calA^*$ denotes the Nash optimal allocation in the given instance. 
	\end{theorem}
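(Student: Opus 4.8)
The plan is to analyze Algorithm~\ref{alg:Part-EFX} directly and verify three things: that it terminates, that the returned partial allocation $\calP$ is $\EFX$, and that $\NSW(\calP) \ge \tfrac12 \NSW(\calA^*)$. Three structural observations drive everything. First, at each iteration only the recipient $k$'s bundle changes and its value strictly increases (since $v_k(P_k) < v_k(X)$ held before the update), so $\sum_i v_i(P_i)$ strictly increases every iteration; as there are finitely many states $(P_1,\dots,P_n)$, no state can repeat and the algorithm terminates. Second, each $v_i(P_i)$ is nondecreasing over the run, because agent $i$'s bundle is untouched except when $i$ is itself the recipient, in which case its value goes up. Third, every assigned bundle is drawn from a single pile $U_j \subseteq A^*_j$, so each final bundle satisfies $P_i \subseteq A^*_{\sigma(i)}$ for a well-defined owner index $\sigma(i)$; I will also use that the exit condition gives $v_i(P_i) \ge v_i(U_j)$ for all $i,j$ at termination.

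For the $\EFX$ guarantee I would fix a pair $i \ne j$ and a good $g \in P_j$, and look at the iteration $t_j$ at which $P_j$ received its current value $X$. Because $X$ was chosen inclusion-wise minimal among subsets of its pile that are envied by some agent, every proper subset $X' \subsetneq X$ fails the envy property, i.e.\ $v_\ell(P_\ell^{(t_j)}) \ge v_\ell(X')$ for all agents $\ell$. Applying this with $X' = P_j \setminus \{g\}$ and $\ell = i$, then invoking monotonicity of $v_i(P_i)$ to pass from time $t_j$ to termination, yields $v_i(P_i) \ge v_i(P_j \setminus \{g\})$, which is exactly $\EFX$. (If $\NSW(\calA^*)=0$ the welfare bound is vacuous, so I may assume all $v_i(A^*_i)>0$; a short argument then shows no final bundle is empty, so $\sigma$ is defined on all of $[n]$.)

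The heart of the proof is the welfare bound, which I would reduce to a per-agent estimate combined with a counting inequality. Writing $U_i$ for the unassigned part of $A^*_i$ at termination and $S_i := A^*_i \setminus U_i$ for its assigned part, subadditivity gives $v_i(A^*_i) \le v_i(U_i) + v_i(S_i)$, and the exit condition gives $v_i(U_i) \le v_i(P_i)$. The assigned part decomposes as $S_i = \bigsqcup_{r \in D_i} P_r$ where $D_i := \{r : \sigma(r)=i\}$, since every assigned good of $A^*_i$ lies in a bundle drawn from $i$'s pile. The key lemma is that $v_i(P_r) \le v_i(P_i)$ for every $r \in D_i$: when $r \ne i$, the bundle $P_r$ was drawn from $i$'s own pile yet handed to someone other than the owner $i$, so the algorithm's if-test failed, meaning $v_i(P_i) \ge v_i(P_r)$ already held at that iteration and hence at termination by monotonicity. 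Subadditivity then gives $v_i(S_i) \le \sum_{r \in D_i} v_i(P_r) \le |D_i|\, v_i(P_i)$, so $v_i(A^*_i) \le (1+|D_i|)\, v_i(P_i)$.

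Taking the product over $i$, it remains to bound $\prod_i (1+|D_i|)$. Since the sets $D_i$ partition $[n]$, the exponents satisfy $\sum_i |D_i| = n$, and as each $|D_i|$ is a nonnegative integer the elementary bound $1+k \le 2^{k}$ gives $\prod_i (1+|D_i|) \le 2^{\sum_i |D_i|} = 2^n$, whence $\prod_i v_i(A^*_i) \le 2^n \prod_i v_i(P_i)$, i.e.\ $\NSW(\calP) \ge \tfrac12 \NSW(\calA^*)$. I expect the main obstacle to be precisely this welfare step: the naive per-agent claim $v_i(P_i) \ge \tfrac12 v_i(A^*_i)$ is false whenever several agents each grab a good from $A^*_i$, so the argument must be global. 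What rescues it is that the overcounting for agent $i$ is controlled by $|D_i|$, the number of bundles carved out of $i$'s pile, and these counts sum to exactly $n$; the inequality $1+k\le 2^k$ (valid only because the $|D_i|$ are integers) then converts the additive budget into the multiplicative factor $2^n$. Verifying the lemma $v_i(P_r)\le v_i(P_i)$ carefully---tracking that ``assigned to a non-owner'' forces the owner's no-envy test to have failed, and that monotonicity preserves this to termination---is the most delicate point.
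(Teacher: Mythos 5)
Your proposal is correct and follows the paper's proof in all essentials: you analyze the same algorithm, use the same decomposition $A^*_i = U_i \cup \bigcup_{r \in D_i} \overline{P}_r$ (your $D_i$ is the paper's $S_i$), prove the same key lemma $v_i(\overline{P}_r) \le v_i(\overline{P}_i)$ for $r \in D_i$ via the failure of the if-test when a bundle carved from $i$'s pile goes to a non-owner, and arrive at the identical per-agent bound $v_i(A^*_i) \le (1+|D_i|)\, v_i(\overline{P}_i)$. Two local differences are worth noting. First, where the paper maintains both the key lemma and the $\EFX$ property as inductive invariants across every iteration, you use a ``last update'' argument: fix the iteration at which a bundle received its final contents, read off the minimality or failed-if-test inequality at that moment, and propagate it to termination via the monotonicity of each $v_i(P_i)$ over the run. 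This is valid precisely because each iteration changes only the recipient's bundle and strictly increases its value, and it is arguably lighter than the paper's full induction. Second, to bound $\prod_i (1+|D_i|)$ you invoke the integer inequality $1+k \le 2^k$ together with $\sum_i |D_i| \le n$, where the paper uses AM-GM; both yield the factor $2$, but AM-GM is the more robust choice --- it needs no integrality and is exactly what the paper reuses (in weighted form) for the asymmetric-$\NSW$ corollary, where the $2^k$ trick would give an exponentially worse bound of the form $2^{n w_{\max}}$ instead of $1 + n w_{\max}$. Finally, your parenthetical about empty bundles is dispensable: an empty $\overline{P}_r$ contributes nothing to the decomposition and only decreases $\sum_i |D_i|$, so no positivity argument is needed to make $\sigma$ (or the counting) go through.
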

\begin{proof}
We will show that the partial allocation returned by Algorithm \ref{alg:Part-EFX} upholds the stated fairness and welfare guarantees. Write $\overline{\calP}=(\overline{P}_1, \ldots, \overline{P}_n)$ to denote the partial allocation obtained by Algorithm \ref{alg:Part-EFX} when the while-loop terminates, i.e., the algorithm returns $\overline{\calP}$. 

In each iteration of the while-loop in Algorithm \ref{alg:Part-EFX}, for the (envied) set $X$ and the selected agent $k \in [n]$, with bundle $P_k$, we have $v_k(P_k) < v_k(X)$. Hence, in any iteration of the loop, assigning $X$ (in lieu of $P_k$) to the selected agent strictly increases her valuation. At the same time, the valuations of all the other agents remain unchanged. That is, throughout the execution of the algorithm the sum of the agents' valuations increases monotonically. This observation implies that the loop terminates and, hence, the partial allocation $\overline{\calP}$ is guaranteed to exist. 

We show below that the Nash social welfare of the returned partial allocation $\overline{\calP}$ is at least $1/2$ times the optimal.

\medskip
\noindent\textbf{Partial allocation $\overline{\calP}$ satisfies $\NSW(\overline{\calP})\geq \frac{1}{2}\NSW(\calA^*)$.}  The construction of the while-loop in Algorithm~\ref{alg:Part-EFX} ensures that, for each agent $r \in [n]$, the bundle $\overline{P}_r \subseteq U_i$, for some agent $i \in [n]$. Also, by construction, $U_i \subseteq A^*_i$ for each $i \in [n]$. Therefore, for every agent $r \in [n]$, the bundle $\overline{P}_r \subseteq A^*_i$, for some agent $i \in [n]$.

We define, for each $i\in [n]$, the set $S_i \coloneqq \{r \in [n] \mid \ \overline{P}_r \subseteq A^*_i \}$.\footnote{\ms{We can assume, without loss of generality, that $A^*_i\neq \emptyset$ for all $i\in [n]$, and hence $\overline{P}_r$s are non-empty. } } Note that the sets $S_i$'s form a partition of $[n]$. That is,  $S_i \cap S_j =\emptyset$, for all $i \neq j$, and  $\bigcup_{i\in [n]} S_i= [n]$. 

We will next show that, for every agent $i \in [n]$, the inequality $v_i(\overline{P}_i) \geq v_i( \overline{P}_r)$ holds for each $r \in S_i$. In particular, we will show, via induction, that these bounds hold for each partial allocation $\calP$ obtained in the while-loop. Here, the base case of the induction holds since the initial partial allocation consists of empty bundles. 

Now, for the induction step, fix an iteration of the while-loop. Let $\calP'=(P'_1, \ldots, P'_n)$ and $\calP''=(P''_1, \ldots, P''_n)$ denote the partial allocations maintained at the beginning and end of the iteration, respectively. Also, let $a \in [n]$ be the agent whose bundle gets updated in the iteration. Note that $a =j$, when the if-condition in Line 4 of the algorithm holds. Otherwise, $a = k$, as selected in Line 7 of the algorithm. 

For the selected agent $a$ we have $P''_a = X$ and for all other agents $i \neq a$, the bundle remains unchanged in the iteration, $P''_i = P'_i$. We will use $S'_i$ and $S''_i$ to capture the containments of the current partial allocations, $S'_i \coloneqq \{r \in [n] \mid \ P'_r \subseteq A^*_i \}$ and $S''_i \coloneqq \{r \in [n] \mid \ P''_r \subseteq A^*_i \}$.

Note that for the set $U_j$ considered in the iteration, we have the containment $P''_a \subseteq U_j$, which in turn implies $S''_j = S'_j \cup \{ a \}$. Besides $a$, the bundles assigned to all agents remain unchanged in the iteration, hence, for all $i \neq j$, we have $S''_i \subseteq S'_i$. Furthermore, the induction hypothesis ensures that $v_i(P'_i) \geq v_i(P'_r)$, for every $i \in [n]$ and each $r \in S'_i$. These observations ensure that, for every agent $i \neq j$ and each $r \in S''_i \subseteq S'_i$ we have the required inequality: 
\begin{align*}
v_i(P''_i) & \geq v_i(P'_i)  \\
& \geq v_i(P'_r) \tag{induction hypothesis} \\
& = v_i(P''_r) \tag{$r \in S''_i$ and, hence, $r \neq a$}
\end{align*}

An analogous bound holds for agent $j$ and each $r \in S''_j$. Specifically, the inequality $v_j(P''_j) \geq v_j(P''_r)$ follows from the induction hypothesis for all $r \in S''_j \setminus \{a \}$. Now, for agent $a \in S''_j$, we consider two cases:

\noindent 
{\it Case {\rm I}: $a = j$.} Here, the inequality $v_j(P''_j) \geq v_j(P''_a)$ directly holds. 

\noindent
{\it Case {\rm II}: $a \neq j$.} In this case, the else-block (Lines 7 and 8) must have executed in the considered iteration of the while-loop. Equivalently, the if-criterion did not hold in the iteration. That is, in the current case, we have $v_j(P'_j) \geq v_j(X) = v_j(P''_a)$. Also, $P''_j = P'_j$, since $a \neq j$. Therefore, in this case as well, we have $v_j(P''_j) \geq v_j(P''_a)$.

Overall, the required inequality  $v_i({P}''_i) \geq v_i( {P}''_r)$ holds every $i \in [n]$ and each $r \in S''_i$. This completes the induction and implies that even the allocation $\overline{\calP}$ obtained at the termination of the while-loop satisfies 
\begin{align}
v_i(\overline{P}_i) \geq v_i(\overline{P}_r) \qquad \text{for all $i \in [n]$ and each $r \in S_i$} \label{ineq:look-within}
\end{align}

Next, note that the termination condition of the loop implies that $v_i(\overline{P}_i)\geq v_i(U_j)$, for all agents $i\in [n]$ and $j\in [n]$. Also, we have
 $A_i^*=U_i\cup \bigcup_{r\in S_i} \overline{P}_r$. Therefore,   
	 \begin{align} 
	 	v_i(A_i^*) &\leq v_i(U_i) + \sum_{r\in S_i} v_i(\overline{P}_r) \tag{$v_i$ is subadditive} \nonumber \\
	 	&\leq  v_i(\overline{P}_i)+  |S_i| v_i(\overline{P}_i) \tag{via inequality (\ref{ineq:look-within})} \nonumber \\ 
	 	&= (1+|S_i|) \ v_i(\overline{P}_i) \label{eq: opt}
	 	\end{align}
	Finally, using inequality (\ref{eq: opt}), we obtain  
\begin{align*}
	\left( \prod_{i\in [n]} v_i(A^*_i) \right)^{\frac{1}{n}} &\leq  	\left( \prod_{i\in [n]} (1+ |S_i|)v_i(\overline{P}_i) \right)^{\frac{1}{n}} \\ 
	&= \left( \prod_{i\in [n]} (1+ |S_i|)\right)^{\frac{1}{n}}   \left( \prod_{i\in [n]} v_i(\overline{P}_i) \right)^{\frac{1}{n}}   \\ 
	&\leq \left(\frac{1}{n} \sum_{i\in [n]} (1+ |S_i|)  \right)\NSW(\overline{\calP}) \tag{AM-GM inequality} \\ 
	&= 2 \NSW(\overline{\calP}).
	\end{align*}
	Here, the last  inequality follows from the fact that $S_i$'s form a partition of $[n]$, i.e., $\sum_{i\in [n]}|S_i| =n$.
	
This establishes the stated $\NSW$ guarantee for $\overline{\calP}$. 

\medskip
	
\noindent\textbf{Partial allocation $\overline{\calP}$ is $\EFX$.} The fairness of the partial allocation $\overline{\calP}$ follows from fact that in each iteration of the while-loop we select an inclusion-wise minimal (envied) set $X$. 


Specifically, we will show that the while-loop maintains $\EFX$ as an invariant. Towards an inductive argument, note that the initial allocation---with $P_i = \emptyset$ for all $i \in [n]$---is trivially $\EFX$. 

Fix any iteration of the loop. Write $\calP'=(P'_1, \ldots, P'_n)$ to denote the partial allocation maintained at the beginning of the considered iteration and $\calP''=(P''_1, \ldots, P''_n)$ be the partial allocation at the end. Also, let $a \in [n]$ be the agent whose bundle is updated in the iteration. Note that $P''_i = P'_i$, for all agents $i \neq a$, and $P''_a = X$. By the induction hypothesis, we have that $\calP'$ is $\EFX$. 

The minimality of the set $X$ identified in the iteration ensures that for every strict subset $Y \subsetneq X$ we have $v_i(P'_i) \geq v_i(Y)$ for all agents $i$. Equivalently, for every good $g \in X$, we have $v_i(P'_i) \geq v_i( X \setminus \{g\})$. Given the update performed in the iteration, we note that the partial allocation $\calP''$ is also $\EFX$: 
\begin{itemize}
\item For all agents $i \neq a $ and against agent $a$, we have $v_i(P''_i) = v_i(P'_i) \geq v_i( X \setminus \{g\}) = v_i(P''_a \setminus \{g\})$ for all goods $g \in P''_a = X$.
\item For agent $a$ and against all other agents $i$, we have $v_a(P''_a) = v_a(X) > v_a(P'_a) \underset{\text{ind.~hyp.}}{\geq} v_a(P'_i \setminus \{g \}) = v_a(P''_i \setminus \{g \})$ for all goods $g \in P''_i = P'_i$.
\item The $\EFX$ criterion holds for all other pairs of agents $i, j \neq a$ via the induction hypothesis. 
\end{itemize}
This completes the induction and shows that  $\overline{\calP}$ is an $\EFX$ allocation. \\
	
Overall, we have that the returned partial allocation $\overline{\calP}$ is $\EFX$ and satisfies $\NSW(\overline{\calP})\geq \frac{1}{2}\NSW(\calA^*)$. This completes the proof. 
\end{proof}

\subsection{Complete $\EFone$ and $\frac{1}{2}$-$\EFX$ Allocations with High Nash Social Welfare}
In this section, we show that under subadditive valuations, there exist  a high-welfare, complete allocations that are $\EFone$. In fact, the allocations are also guaranteed to be $\frac{1}{2}$-$\EFX$. This result is achieved by building on Algorithm \ref{alg:Part-EFX}.

Specifically, we start with the partial allocation returned by Algorithm~\ref{alg:Part-EFX} and assign all the unallocated goods using the envy-cycle elimination algorithm of \citet{LMM04}; see Algorithm \ref{alg:EFone}. The properties of envy-cycle elimination directly imply that the resulting complete allocation is $\EFone$ and continues to have high Nash social welfare. 

Further, using the properties of the allocation returned of Algorithm~\ref{alg:Part-EFX} and with a tailored analysis of the envy-cycle elimination, one can show that allocation returned by Algorithm \ref{alg:EFone} is $\frac{1}{2}$-$\EFX$ as well. We remark that parts of the $\frac{1}{2}$-$\EFX$ analysis follow the arguments used in \cite{PlautR20} and \cite{GHL23}.

{\centering
	\begin{minipage}[t]{\linewidth}
		\begin{algorithm}[H]
			\caption{Complete Set Growing}
			\label{alg:EFone}
			\DontPrintSemicolon
			
			\KwIn{A fair division instance $\calI= \langle [n], [m], \set{v_i}_{i \in [n]} \rangle$ with value oracle access to the subadditive valuations $v_i$'s. A Nash optimal allocation $\calA^*=(A^*_1,...,A^*_n)$.}
			\KwOut{Complete \EFone ($\frac{1}{2}$-$\EFX$) allocation $\calA=(A_1,...,A_n)$ with high NSW.}
			
		Compute partial allocation $\overline{\calP}$ by executing Algorithm \ref{alg:Part-EFX} with given instance $\calI$ and Nash optimal allocation $\calA^*$.
						
			Extend the partial allocation $\overline{\calP}$ to a complete allocation $\calA=(A_1,...,A_n)$ using the envy-cycle elimination algorithm of \citet{LMM04} (this extension ensures that $v_i(A_i) \geq v_i(\overline{P}_i)$ for all $i$). 
						
			\Return Allocation $\calA$ 
		\end{algorithm}
	\end{minipage}
}

\begin{theorem}\label{thm:existEF1} Every fair division instance, with subadditive valuations, admits an $\EFone$ (complete) allocation $\mathcal{A}$ (that is also $\frac{1}{2}$-$\EFX$) with Nash welfare $\NSW(\mathcal{A})\geq \frac{1}{2}  \NSW(\mathcal{A}^*)$. Here, $\calA^*$ denotes the Nash optimal allocation in the given instance. 
\end{theorem}
\begin{proof}
Let $\mathcal{A}$ be the allocation returned by Algorithm \ref{alg:EFone}. The envy-cycle elimination algorithm of \citet{LMM04} extends a partial allocation to a complete  allocation without decreasing the valuation of any agent. Thus, the Nash welfare guarantee of partial allocation $\overline{\calP}$ (Theorem \ref{thm:existsEFX}) continues to hold for the returned allocation $\calA$. Hence,  we have $\NSW(\mathcal{A})\geq \frac{1}{2}  \NSW(\mathcal{A}^*)$. We now show that the allocation $\mathcal{A}$ satisfies the stated fairness properties of  $\EFone$ and $\frac{1}{2}$-$\EFX$.

The partial allocation $\overline{\calP}$ computed by Algorithm \ref{alg:Part-EFX} is $\EFX$ (Theorem \ref{thm:existsEFX}) and, hence, it is $\EFone$. The envy-cycle elimination algorithm of \citet{LMM04} extends a partial $\EFone$ allocation to a complete $\EFone$ allocation thus indeed $\mathcal{A}$ is \EFone.

We now show that $\mathcal{A}$ is $\frac{1}{2}$-$\EFX$. Note that the partial allocation $\overline{\calP}$ is computed in Algorithm \ref{alg:EFone} by invoking Algorithm \ref{alg:Part-EFX} as a subroutine. Considering the termination condition of the while-loop in Algorithm \ref{alg:Part-EFX}, we obtain $v_i \left( \overline{P}_i \right) \geq v_i \left( {U}_j \right)$, for each $i, j \in [n]$; here, ${U}_j$ is the set of unassigned goods within $A^*_j$, i.e., $U_j = A^*_j \setminus \left( \bigcup_{r \in [n]} \overline{P}_r \right)$. Further, the fact that the valuations are monotone gives us $v_i(\overline{P}_i) \geq v_i(U_j) \geq v_i(g)$ for all $g \in U_j$. Therefore, each agent $i \in [n]$ values her bundle, $\overline{P}_i$, at least as much as any unassigned good $g \in \bigcup_{j=1}^n U_j = [m] \setminus \left( \bigcup_{r \in [n]} \overline{P}_r \right)$, i.e., 
\begin{align}
v_i(\overline{P}_i) \geq v_i(g) \quad \text{for all $g \in [m] \setminus \left( \bigcup_{r \in [n]} \overline{P}_r \right)$} \label{ineq:good-wise}
\end{align} 

Next, we will show, inductively, that the $\frac{1}{2}$-$\EFX$ guarantee holds in every iteration of the envy-cycle elimination method. Here, the base case holds since $\overline{\calP}$ is $\EFX$ (hence, also $\frac{1}{2}$-$\EFX$). For the induction step, fix any iteration of the envy-cycle elimination method and consider the good $g$ assigned to an unenvied agent $a$ in the iteration. Also, let $\calP = (P_1, \ldots, P_n)$ be the maintained partial allocation in the envy-cycle elimination method before this update and $\calP'=(P'_1, \ldots, P'_n)$ be the one after the update. Hence, $P'_a = P_a \cup \{g \}$ and  $P'_i = P_i$ for all $i \neq a$. The partial allocation $\calP$ is $\frac{1}{2}$-$\EFX$, via the induction hypothesis. Also, note that $g$ was unallocated before the considered iteration and, hence, $g \in [m] \setminus \left( \bigcup_{r \in [n]} \overline{P}_r \right)$. Hence, for any good $h \in P'_a$ and each agent $i \neq a$, we have 
	\begin{align*}
		v_i(P'_a \setminus \{h \}) & \leq v_i \left(P'_a \right)  \tag{monotonicity of $v_i$} \\
		&  \leq v_i \left(P_a \right) +  v_i \left(g  \right)  \tag{$P'_a = P_a \cup \{g\}$ and $v_i$ is subadditive} \\
		&  \leq   v_i \left( P_i \right)  + v_i \left(g  \right)  \tag{$a$ is an unenvied agent} \\
		& \leq 2 v_i \left( P_i \right) \tag{via (\ref{ineq:good-wise})} \\
		& = 2 v_i(P'_i) \tag{$P'_i = P_i$ for all $i \neq a$}
	\end{align*}
This bound shows that the $\frac{1}{2}$-$\EFX$ criterion holds against agent $a$ in the partial allocation $\calP'$. In addition, note that $v_a(P'_a) \geq v_a (P_a)$ and the bundles of all other agents $i \neq a$ remain unchanged between $\calP$ and $\calP'$. Hence, for all pairs of agents $i \in [n]$ and $j \in [n] \setminus \{a\}$, the partial allocation $\calP'$ inherits the $\frac{1}{2}$-$\EFX$ property from $\calP$. Therefore, inductively, we have that the complete allocation $\calA$ obtained at the end of the envy-cycle elimination method is $\frac{1}{2}$-$\EFX$. 

Overall, we get that the returned allocation $\calA$ is $\EFone$, $\frac{1}{2}$-$\EFX$, and satisfies the stated welfare guarantee. This completes the proof of the theorem. 
\end{proof}

%

\subsection{Tight Lower Bound for $\EFone$}
\label{section:ef1-lowerb}
We now show that the approximation guarantee for $\EFone$ (obtained in Theorem \ref{thm:existEF1}) is tight: For subadditive valuations, one cannot always expect $\EFone$ allocations with $\NSW$ more than half of the optimal.  

\begin{theorem}
\label{theorem:ef1-lowerb}
	For any $\varepsilon>0$, there exist fair division instances, with subadditive  valuations, wherein no $\EFone$ allocation has Nash social welfare more than $\left(\frac{1}{2} +\varepsilon \right)$ times the optimal, i.e., for \textit{every} $\EFone$ allocation $\calA$ we have
	$$\NSW(\mathcal{A}) \leq \left( \frac{1}{2} +\varepsilon \right) \NSW(\mathcal{A}^*).$$ 
	Here, $\calA^*$ denotes the Nash optimal allocation in the given instance.
\end{theorem}
\begin{proof}
	Consider a fair division instance with $m=2n-1$ indivisible goods and $n$ agents. The $m$ goods are partitioned into $n$ sets: $G_1=\{g_1\}$ and  $G_i=\{ g_{i}^1,g_{i}^2\}$ for each $2\leq i \leq n$.  Each agent $1\leq i \leq n-1$ has non-zero marginal value only for goods in $G_i \cup G_{i+1} $, and agent $n$ has non-zero marginal values only for goods in $G_n$ . With fixed parameter $\delta \in \left(0, \frac{1}{10} \right)$, we set the valuations of agents $1\leq i \leq n-1$ as follows: 
	\begin{align*}
		v_i(S) \coloneqq 
		\begin{cases}
			(1-\delta)|S \cap G_i|  \  &\text{if} \quad |S \cap G_{i+1}|=0  \\
			\max \left\{ 1,(1-\delta)|S \cap G_i|  \right\} \quad  &\text{if} \quad |S \cap G_{i+1}|=1  \\
			\max \left\{ 1+\delta,(1-\delta)|S \cap G_i|  \right\} \quad &\text{if} \quad |S \cap G_{i+1}|=2  \\
		\end{cases}
	\end{align*}
	For agent $n$, the valuation $v_n(S) \coloneqq (1-\delta)|S \cap G_n|$. Note that the agents' valuations are subadditive. Here, for each agent $i$,  it suffices to check subadditivity only for subsets of $G_i\cup G_{i+1}$.  
	
	In this instance, the $\NSW$ of an optimal allocation $\calA^*$ is at least as much as the $\NSW$ obtained by assigning $G_i$ to agent $i$ thus the optimal NSW is satisfies $\NSW(\calA^*) \geq \left( (1-\delta)^{n} 2^{n-1} \right)^{1/n} = (1-\delta) 2^{1-\frac{1}{n}}$. 
	
	Next, we bound the Nash welfare of any \EFone allocation. In particular, we will show that in any \EFone allocation $\calA$, every agent satisfies $v_i(A_i)\leq (1+\delta)$. 
	
	Fix any $\EFone$ allocation $\calA=(A_1, \ldots, A_n)$ and assume, towards a contradiction, that there exists an agent $j$ who gets a bundle of value more than $(1+\delta)$, i.e., $ v_j(A_j) \geq  (1-\delta)2$.\footnote{Note that, by the construction of the valuations, if for any agent $i$ and subset $S \subseteq [m]$ we have $v_i(S)> (1+\delta)$, then  $v_i(S) \geq  (1-\delta)2 $.} Note that as agent $1$'s valuation is at most $(1+\delta)$,  we have $j\geq 2$. Further, it must be the case that  $|A_j\cap G_{j}|\geq 2$ -- this is a necessary condition for agent $j$ to obtain value $(1-\delta) 2$. 
	
	Note that $|G_{j}|=2$ and $|A_j\cap G_{j}|\geq 2$. Hence, for agent $j-1$, we have $|A_{j-1}\cap G_{j}|=0$. Now, for agent $j-1$ to be \EFone towards agent $j$, it must hold that $G_{j-1}\subseteq A_{j-1}$. Otherwise, agent $j-1$ would get at most value $1-\delta$ and would value agent $j$'s bundle at least $1$ after removal of any good. Applying this argument iteratively we obtain $|A_2\cap G_{2}|\geq 2$. However, this inequality implies that agent $1$  \EFone envies agent $2$,  contradicting the assumption that the underlying allocation $\calA$ is $\EFone$. 
	
	Therefore, by way of contradiction, we obtain that in any \EFone allocation $\calA$, for each agent $i$ the valuation $v_i(A_i)\leq (1+\delta)$. Hence, $\NSW(\calA)\leq (1+\delta)$. Finally, the above-mentioned bounds, lead to   
	\begin{align}
		\frac{\NSW(\calA)}{\NSW(\calA^*)} & \leq \frac{(1+\delta)}{(1-\delta) 2^{1-\frac{1}{n}}} \label{eq:hardinstance}
	\end{align} 
	For any $\varepsilon >0$, a sufficiently small $\delta >0$ and a sufficiently large $n \in \mathbb{Z}_+$ ensure that the of right-hand-side of equation~(\ref{eq:hardinstance}) is at most $\frac{1}{2} + \varepsilon$. Therefore, for any $\varepsilon>0$, there exists an instance wherein $\frac{\NSW(\calA)}{\NSW(\calA^*)} \leq \frac{1}{2}+\varepsilon$. The theorem stands proved. 
\end{proof}

\subsection{Asymmetric Nash Social Welfare}
Our compatibility results generalize to settings wherein efficiency is quantified via asymmetric Nash social welfare. Formally, given an instance $\langle [n], [m], \{v_i\}_{i=1}^n \rangle$ and a tuple $\vec{w}=(w_1,\ldots,w_n) \in \mathbb{R}^n_+$, with $\sum_{i=1}^n w_i = 1$, the asymmetric Nash social welfare of an allocation $\calA=(A_1,\ldots,A_n)$ is defined as $\NSW_{\vec{w}} (\calA) \coloneqq \prod_{i=1}^n \left( v_i(A_i) \right)^{w_i}$. Asymmetric $\NSW$ aims to capture fair division among agents with varying entitlements, $w_i$'s. 
Also, note that when all the entitlements are equal---$w_i = \frac{1}{n}$ for all $i$---we get back the standard $\NSW$.  

The result below establishes a compatibility between $\EFone$ and asymmetric $\NSW$. Analogous results hold when considering fairness in terms of partial $\EFX$ and complete $\frac{1}{2}$-$\EFX$.  \ms{We primarily focus on $\EFone$  instead of weighted $\EFone$ ($\textsc{WEF1}$) here because, beyond additive valuations (even with two agents and submodular valuations), weighted \EFone is not guaranteed to exist \cite{chakraborty2021weighted}.}
	
\begin{corollary}	
For every fair division instance $\langle [n], [m], \{v_i\}_{i=1}^n \rangle$, with subadditive valuations, and entitlement $\vec{w}=(w_1,...,w_n) \in \mathbb{R}^n_+$, there exists a complete $\EFone$ allocation $\mathcal{A}$ which achieves  $( nw_{\max}+1)$-approximation to the optimal asymmetric $\NSW$. Here, $w_{\max} \coloneqq \max_{i \in [n]} \ w_i$. 
\end{corollary}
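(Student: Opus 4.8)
The plan is to re-run the construction behind Theorem~\ref{thm:existEF1}, feeding it the asymmetric-NSW-optimal allocation and replacing the unweighted AM-GM step with its weighted counterpart. Write $\calB=(B_1,\ldots,B_n)$ for an allocation maximizing $\NSW_{\vec{w}}$. The first point I would stress is that Algorithm~\ref{alg:Part-EFX} never uses any optimality property of its input allocation: both the $\EFX$ invariant and the per-agent inequality analogous to (\ref{eq: opt}) are derived purely from subadditivity of the $v_i$'s and the termination condition of the while-loop. Hence I would simply run Algorithm~\ref{alg:EFone} with the input ``Nash optimal allocation'' set to $\calB$, obtaining a partial $\EFX$ allocation $\overline{\calP}$ that satisfies, for every $i\in[n]$,
\[
  v_i(B_i) \;\le\; (1+|S_i|)\, v_i(\overline{P}_i),
\]
where $S_i=\{r\in[n] : \overline{P}_r \subseteq B_i\}$ forms a partition of $[n]$, so that $\sum_{i\in[n]}|S_i|=n$.

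Raising the $i$-th inequality to the power $w_i$ and multiplying over $i$ yields
\[
  \NSW_{\vec{w}}(\calB) \;=\; \prod_{i\in[n]} v_i(B_i)^{w_i} \;\le\; \Big(\prod_{i\in[n]} (1+|S_i|)^{w_i}\Big)\, \NSW_{\vec{w}}(\overline{\calP}).
\]
The crux of the argument is to bound the leading product. Since $\sum_{i\in[n]} w_i = 1$, the weighted AM-GM inequality gives $\prod_{i}(1+|S_i|)^{w_i} \le \sum_{i} w_i(1+|S_i|) = 1 + \sum_{i} w_i |S_i|$. Because $w_i \le w_{\max}$ for every $i$ and the $S_i$ partition $[n]$, we get $\sum_{i} w_i|S_i| \le w_{\max}\sum_{i}|S_i| = n\,w_{\max}$. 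Combining these estimates produces $\prod_{i}(1+|S_i|)^{w_i} \le 1 + n\,w_{\max}$, and therefore $\NSW_{\vec{w}}(\overline{\calP}) \ge \frac{1}{1+nw_{\max}}\,\NSW_{\vec{w}}(\calB)$.

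Finally, I would invoke the envy-cycle-elimination step of Algorithm~\ref{alg:EFone}, which extends the partial $\EFX$ (hence $\EFone$) allocation $\overline{\calP}$ to a complete $\EFone$ allocation $\calA$ without decreasing any agent's valuation, i.e.\ $v_i(A_i) \ge v_i(\overline{P}_i)$ for all $i$. Monotonicity of each factor $x\mapsto x^{w_i}$ then preserves the welfare bound, giving $\NSW_{\vec{w}}(\calA) \ge \NSW_{\vec{w}}(\overline{\calP}) \ge \frac{1}{1+nw_{\max}}\,\NSW_{\vec{w}}(\calB)$, which is exactly the claimed $(nw_{\max}+1)$-approximation. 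The only genuinely new ingredient relative to Theorem~\ref{thm:existEF1} is the substitution of unweighted AM-GM by weighted AM-GM, together with the bound $\sum_i w_i|S_i|\le n\,w_{\max}$; I expect the main (though routine) obstacle to be merely confirming that every guarantee of Algorithm~\ref{alg:Part-EFX} carries over verbatim for an input that need not be symmetric-NSW-optimal. As a sanity check, setting $w_i=1/n$ gives $w_{\max}=1/n$ and recovers the factor $2$ of Theorem~\ref{thm:existEF1}.
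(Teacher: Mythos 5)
Your proposal is correct and matches the paper's own proof essentially step for step: run Algorithm~\ref{alg:EFone} on the asymmetric-NSW-optimal allocation, reuse the per-agent bound $v_i(B_i)\leq (1+|S_i|)\,v_i(\overline{P}_i)$ from inequality~(\ref{eq: opt}) (which indeed never relies on optimality of the input), apply weighted AM-GM with $\sum_i w_i=1$ and $\sum_i |S_i|=n$ to get the factor $1+nw_{\max}$, and let envy-cycle elimination preserve both \EFone and the welfare bound. Your explicit remark that Algorithm~\ref{alg:Part-EFX}'s guarantees are input-agnostic, and the sanity check recovering the factor $2$ at $w_i=1/n$, are sound additions but not departures from the paper's argument.
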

\begin{proof}  
Let  $\calB^*=(B_1^*,\ldots,B_n^*)$ denote an allocation that maximizes the asymmetric $\NSW$, with entitlements $\vec{w}=(w_1,...,w_n)$. We show that Algorithm~\ref{alg:EFone}, when executed with $\calB^*$ as input, returns an $\EFone$ allocation with high asymmetric $\NSW$.  

Let $\calA$ denote the allocation returned by Algorithm~\ref{alg:EFone}. The same argument as in the proof of Theorem~\ref{thm:existEF1} shows that $\calA$ is $\EFone$. We now focus on the $\NSW$ approximation guarantee. Note that Algorithm \ref{alg:Part-EFX} is called within Algorithm~\ref{alg:EFone}. Write $\overline{\calP}=(\overline{P}_1, \ldots, \overline{P}_n)$ be the partial allocation obtained via the the call to Algorithm \ref{alg:Part-EFX}. Using arguments similar to the ones used in Theorem~\ref{thm:existEF1} (see inequality (\ref{eq: opt})), for each agent $i \in [n]$, we have $v_i(B_i^*) \leq  (1+ |S_i|) \ v_i( \overline{P}_i)$. 
		Therefore, we can upper bound the optimal $\NSW$ in the asymmetric case as follows  
		\begin{align*}
			\prod_{i\in [n]} \left(v_i(B^*_i)\right)^{w_i} &\leq  	 \prod_{i\in [n]} \left( (1+ |S_i|) \ v_i(\overline{P}_i) \right)^{w_i} \\ 
			&=  \prod_{i\in [n]} \left(1+ |S_i|\right)^{w_i}    \prod_{i\in [n]} \left( v_i(\overline{P}_i) \right)^{w_i}   \\ 
			&\leq \left( \sum_{i\in [n]} w_i (1+ |S_i|)  \right)\NSW_{\vec{w}}(\overline{\calP}) \tag{weighted AM-GM inequality} \\ 
			&= \left( 1+ \sum_{i\in [n]} w_i |S_i|  \right)\NSW_{\vec{w}}(\overline{\calP})  \tag{since $\sum_{i\in [n]} w_i =1$} \\ 
			&\leq ( n w_{\max} +1) \NSW_{\vec{w}}(\overline{\calP}) \tag{since $\sum_{i\in N}|S_i| =n$}
		\end{align*}		
Throughout the execution of the envy-cycle elimination method (Line 2 in Algorithm \ref{alg:EFone}), the agents' valuations do not decrease. Hence, for the returned allocation $\calA$ and each agent $i \in [n]$, we have $v_i(A_i) \geq v_i(\overline{P}_i)$. This inequality and bound on the asymmetric $\NSW$ of $\overline{\calP}$ lead to the stated guarantee: $( n w_{\max} +1) \NSW_{\vec{w}}({\calA}) \geq \prod_{i\in [n]} \left(v_i(B^*_i)\right)^{w_i}$. 	The corollary stands proved. 
		\end{proof}

\section{Fair and Efficient Allocations Under Constraints}\label{sec:constraint}
This section highlights adaptations of our results for fair and efficient allocations under constraints. Constraint fair division has been extensively studied in recent years, specifically in indivisible-items settings; see \cite{suksompong2021constraints} for a survey. Much of this literature has focused on additive valuations. 
 
Here, we provide positive results for significantly more general valuations (namely, subadditive) and under a broad class of constraints induced by downward-closed set families.  

In discrete fair division, constraints are specified by a set family $\calF\subseteq 2^{[m]}$ by requiring that each agent's bundle is contained in $\mathcal{F}$. Indeed, setting $\calF$ as the collection of independent sets in a matroid gives us matroid constraints and with $\calF$ as the collection of all connected components of a given graph we get connectivity constraints; see \cite{suksompong2021constraints} for other examples of combinatorial constraints. Formally, an allocation $\calA=(A_1,\ldots,A_n)$ (partial or complete) is said to be \textit{feasible}, under $\calF$, if $A_i\in \calF$ for each $i\in [n]$. We denote set of all feasible allocations under $\calF$ as
 $$\Pi^{\calF} := \{ (A_1,\ldots,A_n) \mid  A_i\in \calF \text{ for each } i\in [n] \}.$$ 

Here, we focus on constraints induced by downward-closed set families. Recall that a set family $\calF\subseteq 2^{[m]}$ is said to be {downward closed} if, for every subset $A\in \calF$, we have $B \in \calF$, for each $B \subseteq A$. Note that constraints imposed by matroids and, hence, cardinality constraints are downward closed. Knapsack constraints are also downward closed. 

In constraint fair division, Pareto efficiency is considered across feasible allocations. Furthermore, an allocation $\calA$ is said to be $\alpha$-PO if there is no other allocation $\calB \in \Pi^{\calF}$ such that $v_i(A_i) \leq \alpha \ v_i(B_i)$ for each $i \in [n]$. It is relevant to note that the allocation with all empty bundles is feasible under any downward closed constraint and is envy-free. However, the fact that we focus on (approximate) Pareto efficiency---in addition to fairness and feasibility---ensures that the obtained allocations provide meaningful guarantees.  

Our next result establishes that, under subadditive valuations and for \textit{any} downward-closed constraint, there exists a  feasible (partial) allocation that is both fair and approximately Pareto efficient. 

\begin{theorem}\label{thm:ConstraintExist} Every fair division instance, with subadditive valuations under a downward-closed constraint, admits a feasible (partial)  $\EFX$ allocation $ \overline{\calP}$ with the property that
	 $$\NSW(\overline{\calP}) \geq \frac{1}{2} \max_{\mathcal{X} \in \Pi^\calF} \NSW(\mathcal{X}).$$
\end{theorem}
\begin{proof} 
We obtain the theorem by invoking Algorithm~\ref{alg:Part-EFX} with feasible allocation $\calA^*=(A^*_1, \ldots, A^*_n)$ that maximizes Nash social welfare, $\calA^* \in \argmax_{\mathcal{X} \in \Pi^\calF} \NSW(\mathcal{X})$. 

We note that the partial $\EFX$ allocation $\overline{\calP}=(\overline{P}_1, \ldots, \overline{P}_n)$, guaranteed in Theorem \ref{thm:existsEFX}, additionally satisfies the following containment property: each bundle $\overline{P}_i \subseteq A^*_k$, for some $k \in [n]$. Since $A^*_k \in \calF$ and the constraints are downward closed, we get that each bundle $P^*_i \in \calF$. Hence, the partial allocation returned by Algorithm~\ref{alg:Part-EFX} is feasible, $\overline{\calP} \in \Pi^\calF$. Theorem \ref{thm:existsEFX} also gives us $\NSW(\overline{\calP}) \geq \frac{1}{2} \NSW(\calA^*)$. Since $\overline{\calP}$ is \EFX and feasible, the theorem stands proved.  
%
\end{proof}

\begin{remark}
Extending Theorem \ref{thm:ConstraintExist} to require that the feasible allocation is complete and $\EFone$ leads us to an interesting open question. In fact, existence of feasible and complete $\EFone$ allocations is open even for identical, additive valuations and under general matroid constraints; see, e.g., \cite{CES24} and references therein. 
\end{remark}

\citet{CES24} show that, for additive valuations and with matroid constraints, there exists a feasible (partial) allocation that is $\frac{1}{2}$-\EFone and $\PO$. 
Our result complements theirs by showing the existence of (partial) allocation that is \EFX and $\frac{1}{2}$-$\PO$, in the more general setting of subadditive valuations with downward-closed constraints.

We next show that, for matroid constraints, the existential guarantee of Theorem \ref{thm:ConstraintExist} has an algorithmic counterpart.
 

\begin{theorem}\label{thm:ConstrainedEF1}
Let $\mathcal{I}^{\mathcal{M}} = \langle [n], [m], \{ v_i \}_{i=1}^n , \mathcal{M}\rangle$ be a fair division instance with additive valuations and  constraints imposed by matroid $\mathcal{M}$. Then, given an $\alpha$-approximation algorithm for maximizing $\NSW$ under  submodular valuations, we can compute, in polynomial time (using the independence oracle for $\mathcal{M}$), a feasible (partial) allocation $\calA$ that is $\EFone$ and 
$$\NSW(\calA) \geq \frac{1}{\alpha e^{2/e}} \max_{\mathcal{X} \in \Pi^\calF} \NSW(\mathcal{X}).$$  
\end{theorem}
\begin{proof}
We first transform the given constraint fair division instance, $\calI^{\mathcal{M}}= \langle [n], [m], \set{v_i}_{i \in [n]} , \mathcal{M} \rangle$,  to an unconstrained instance  $\calI= \langle [n], [m], \set{\overline{v}_i}_{i \in [n]} \rangle$ with submodular valuations $\overline{v}_i$s; a similar approach was utilized in \cite{BisBar18}. In particular, for every $i \in [n]$ and each subset $S \subseteq [m]$, define valuation function $\overline{v}_i (S) \coloneqq \max \limits_{T\subseteq S : T \in \calM}v_i(T)$. Since the given valuations $v_i$s are additive, the defined ones,  $\overline{v}_i$s, are weighted matroid rank functions. Hence, $\overline{v}_i$s are (monotone) submodular \cite{schrijver2003combinatorial}. 

In addition, for any given $S \subseteq [m]$, the value $\overline{v}_i(S)$ can be computed in polynomial time (via the greedy algorithm) using the independence oracle for $\mathcal{M}$ \cite{schrijver2003combinatorial}. That is, we can answer value queries for $\overline{v}_i$s in polynomial time. 

For the constructed instance $\mathcal{I}$, we now apply the  $\alpha$-approximation algorithm for maximizing $\NSW$. Let $\tilA$ be the resulting allocation and note that  $\NSW(\tilA)\geq \frac{1}{\alpha} \NSW(\calA^*)$, where $\calA^*$ denotes the $\NSW$ optimal allocation under $\calI$. The definition of $\overline{v}_i$s implies that the optimal $\NSW$ among feasible allocations in $\calI^{\mathcal{M}}$ is also $\NSW(\calA^*)$, \ms{i.e., $\NSW(\calA^*)= \max_{\mathcal{X} \in \Pi^\calF} \NSW(\mathcal{X})$}. 
 
Next, applying \Cref{thm:PolyEF1} to instance $\calI$ and with input allocation $\tilA$, we obtain, in polynomial time, an $\EFone$ allocation $\calA=(A_1,\ldots,A_n)$ that satisfies $\NSW(\calA)\geq \frac{1}{e^{2/e}} \NSW(\tilA)\geq \frac{1}{\alpha e^{2/e}}   \NSW(\calA^*)$.
 
 We will extract the desired allocation $\calR=(R_1,\ldots,R_n)$ from the computed allocation $\calA$. Specifically, for each $i \in [n]$, define $R_i \in \argmax\limits_{T \subseteq A_i \ : \ T \in \calM} \ v_i(T)$. Here, by definition, each $R_i \in \calM$ and, hence, the allocation $\calR$ is feasible. Also, by construction, $v_i(R_i)  = \overline{v}_i (A_i)$. Therefore, $\NSW(\calR) \geq \frac{1}{\alpha e^{2/e}}   \NSW(\calA^*)$.

 
 Finally, we establish that $\calR$ is $\EFone$ in $\mathcal{I}^{\mathcal{M}}$. Recall that $\calA$ is $\EFone$ in $\calI$ (\Cref{thm:PolyEF1}). Hence, $\overline{v}_i(A_i) \geq \overline{v}_i(A_j \setminus \{g'\})$, for each $i, j \in [n]$ and some $g' \in A_j$. Further, we have $v_i(R_i)  = \overline{v}_i(A_i) \geq \overline{v}_i(A_j \setminus \{g' \})  \geq  \overline{v}_i(R_j \setminus \{g' \}) = v_i(R_j \setminus \{g'\})$. Here, the last inequality follows from the fact that $R_j \setminus \{g'\} \subseteq A_j \setminus \{g' \}$ and $\overline{v}_i$ is monotone. The last equality holds since $R_j \setminus \{g' \} \in \calM$; recall the definition of $\overline{v}_i$. Therefore, $\calR$ upholds the $\EFone$ criterion under $v_i$s, i.e., the feasible allocation $\calR$ is $\EFone$ in $\mathcal{I}^{\mathcal{M}}$. This completes the proof of the theorem. 
\end{proof}
\begin{remark} 
We can use  \Cref{thm:ConstrainedEF1} in conjunction with the $4$-approximation algorithm of  \cite{GHL23} for $\NSW$ maximization under submodular valuations.  \ms{ This instantiation shows that, under additive valuations and any matroid constraint, a feasible (partial) $\EFone$ allocation that achieves a $\frac{1}{4 e^{2/e}}$-approximation to the optimal Nash social welfare (among all feasible allocations)  can be computed in polynomial time.}
\end{remark}

\section{Conclusion and Future Work}
This work establishes that, under subadditive valuations, (partial) \EFX and \EFone properties can be attained simultaneously with a factor $\frac{1}{2}$ approximation to the optimal Nash social welfare. Our $\NSW$ approximation guarantees regarding \EFX and \EFone are both tight. These results  imply that, for subadditive valuations, (partial) \EFX and \EFone are compatible with $\frac{1}{2}$-\PO . Given that \EFone and exact \PO are incompatible for subadditive valuations, an immediate and interesting question is to determine the largest $\alpha$ for which \EFone and $\alpha$-\PO are compatible. Our results prove that 
$\alpha \geq \frac{1}{2}$.

Another interesting direction is to establish tight bounds on the compatibility between $\EFone$ and the approximations of  $\NSW$ under submodular (and gross substitutes) valuations. Under submodular valuations and with respect to $\NSW$, our results prove that the approximation factor is in the range $[\frac{1}{2},\frac{1}{e^{1/e}}]$.
 
\section*{Acknowledgments.}
Siddharth Barman gratefully acknowledge the support of the Walmart Centre for Tech Excellence (CSR WMGT-23-0001) an Ittiam CSR Grant (OD/OTHR-24-0032).
Mashbat Suzuki is supported by the ARC Laureate Project
FL200100204 on ``Trustworthy AI''.

\bibliographystyle{ACM-Reference-Format}
\bibliography{abb,bibFairDiv}


\begin{thebibliography}{39}


\ifx \showCODEN    \undefined \def \showCODEN     #1{\unskip}     \fi
\ifx \showDOI      \undefined \def \showDOI       #1{#1}\fi
\ifx \showISBNx    \undefined \def \showISBNx     #1{\unskip}     \fi
\ifx \showISBNxiii \undefined \def \showISBNxiii  #1{\unskip}     \fi
\ifx \showISSN     \undefined \def \showISSN      #1{\unskip}     \fi
\ifx \showLCCN     \undefined \def \showLCCN      #1{\unskip}     \fi
\ifx \shownote     \undefined \def \shownote      #1{#1}          \fi
\ifx \showarticletitle \undefined \def \showarticletitle #1{#1}   \fi
\ifx \showURL      \undefined \def \showURL       {\relax}        \fi
\providecommand\bibfield[2]{#2}
\providecommand\bibinfo[2]{#2}
\providecommand\natexlab[1]{#1}
\providecommand\showeprint[2][]{arXiv:#2}

\bibitem[Amanatidis et~al\mbox{.}(2023)]%
        {AAB23}
\bibfield{author}{\bibinfo{person}{Georgios Amanatidis}, \bibinfo{person}{Haris
  Aziz}, \bibinfo{person}{Georgios Birmpas}, \bibinfo{person}{Aris
  Filos{-}Ratsikas}, \bibinfo{person}{Bo Li}, \bibinfo{person}{Herv{\'{e}}
  Moulin}, \bibinfo{person}{Alexandros~A. Voudouris}, {and}
  \bibinfo{person}{Xiaowei Wu}.} \bibinfo{year}{2023}\natexlab{}.
\newblock \showarticletitle{Fair division of indivisible goods: Recent progress
  and open questions}.
\newblock \bibinfo{journal}{\emph{Artif. Intell.}}  \bibinfo{volume}{322}
  (\bibinfo{year}{2023}), \bibinfo{pages}{103965}.
\newblock


\bibitem[Arunachaleswaran et~al\mbox{.}(2019)]%
        {ABK19cake}
\bibfield{author}{\bibinfo{person}{Eshwar~Ram Arunachaleswaran},
  \bibinfo{person}{Siddharth Barman}, \bibinfo{person}{Rachitesh Kumar}, {and}
  \bibinfo{person}{Nidhi Rathi}.} \bibinfo{year}{2019}\natexlab{}.
\newblock \showarticletitle{Fair and efficient cake division with connected
  pieces}. In \bibinfo{booktitle}{\emph{International Conference on Web and
  Internet Economics}}. Springer, \bibinfo{pages}{57--70}.
\newblock


\bibitem[Barbanel(2005)]%
        {Barbanel05}
\bibfield{author}{\bibinfo{person}{Julius~B. Barbanel}.}
  \bibinfo{year}{2005}\natexlab{}.
\newblock \bibinfo{booktitle}{\emph{The geometry of efficient fair division}}.
\newblock \bibinfo{publisher}{Cambridge University Press}.
\newblock


\bibitem[Barman et~al\mbox{.}(2020)]%
        {BBN20}
\bibfield{author}{\bibinfo{person}{Siddharth Barman}, \bibinfo{person}{Umang
  Bhaskar}, {and} \bibinfo{person}{Nisarg Shah}.}
  \bibinfo{year}{2020}\natexlab{}.
\newblock \showarticletitle{Optimal bounds on the price of fairness for
  indivisible goods}. In \bibinfo{booktitle}{\emph{Proceedings of the 16th
  International Conference on Web and Internet Economics (WINE)}}. Springer,
  \bibinfo{pages}{356--369}.
\newblock


\bibitem[Barman et~al\mbox{.}(2018)]%
        {BKV18}
\bibfield{author}{\bibinfo{person}{Siddharth Barman},
  \bibinfo{person}{Sanath~Kumar Krishnamurthy}, {and} \bibinfo{person}{Rohit
  Vaish}.} \bibinfo{year}{2018}\natexlab{}.
\newblock \showarticletitle{Finding Fair and Efficient Allocations}. In
  \bibinfo{booktitle}{\emph{Proceedings of the 2018 {ACM} Conference on
  Economics and Computation (EC)}}. \bibinfo{publisher}{{ACM}},
  \bibinfo{pages}{557--574}.
\newblock


\bibitem[Barman and Kulkarni(2023)]%
        {BarmanK23}
\bibfield{author}{\bibinfo{person}{Siddharth Barman} {and}
  \bibinfo{person}{Pooja Kulkarni}.} \bibinfo{year}{2023}\natexlab{}.
\newblock \showarticletitle{Approximation Algorithms for Envy-Free Cake
  Division with Connected Pieces}. In \bibinfo{booktitle}{\emph{50th
  International Colloquium on Automata, Languages, and Programming, {(ICALP)}
  2023, July 10-14, 2023, Paderborn, Germany}}.
\newblock


\bibitem[Bei et~al\mbox{.}(2019)]%
        {BLM19}
\bibfield{author}{\bibinfo{person}{X. Bei}, \bibinfo{person}{X. Lu},
  \bibinfo{person}{P. Manurangsi}, {and} \bibinfo{person}{W. Suksompong}.}
  \bibinfo{year}{2019}\natexlab{}.
\newblock \showarticletitle{The Price of Fairness for Indivisible Goods}. In
  \bibinfo{booktitle}{\emph{Proceedings of the Twenty-Eighth International
  Joint Conference on Artificial Intelligence (IJCAI-19)}}.
  \bibinfo{publisher}{International Joint Conferences on Artificial
  Intelligence Organization}, \bibinfo{pages}{81--87}.
\newblock


\bibitem[Berger et~al\mbox{.}(2022)]%
        {BCFF22}
\bibfield{author}{\bibinfo{person}{Ben Berger}, \bibinfo{person}{Avi Cohen},
  \bibinfo{person}{Michal Feldman}, {and} \bibinfo{person}{Amos Fiat}.}
  \bibinfo{year}{2022}\natexlab{}.
\newblock \showarticletitle{Almost Full {EFX} Exists for Four Agents}. In
  \bibinfo{booktitle}{\emph{Proceedings Thirty-Sixth {AAAI} Conference on
  Artificial Intelligence (AAAI)}}. \bibinfo{publisher}{{AAAI} Press},
  \bibinfo{pages}{4826--4833}.
\newblock


\bibitem[Bertsimas et~al\mbox{.}(2011)]%
        {bertsimas2011price}
\bibfield{author}{\bibinfo{person}{Dimitris Bertsimas},
  \bibinfo{person}{Vivek~F Farias}, {and} \bibinfo{person}{Nikolaos
  Trichakis}.} \bibinfo{year}{2011}\natexlab{}.
\newblock \showarticletitle{The price of fairness}.
\newblock \bibinfo{journal}{\emph{Operations research}} \bibinfo{volume}{59},
  \bibinfo{number}{1} (\bibinfo{year}{2011}), \bibinfo{pages}{17--31}.
\newblock


\bibitem[Biswas and Barman(2018)]%
        {BisBar18}
\bibfield{author}{\bibinfo{person}{Arpita Biswas} {and}
  \bibinfo{person}{Siddharth Barman}.} \bibinfo{year}{2018}\natexlab{}.
\newblock \showarticletitle{Fair Division Under Cardinality Constraints.}. In
  \bibinfo{booktitle}{\emph{Proceedings of the 2018 International Joint
  Conference on Artificial Intelligence (IJCAI)}}. \bibinfo{pages}{91--97}.
\newblock


\bibitem[Bu et~al\mbox{.}(2025)]%
        {bu2025approximability}
\bibfield{author}{\bibinfo{person}{Xiaolin Bu}, \bibinfo{person}{Zihao Li},
  \bibinfo{person}{Shengxin Liu}, \bibinfo{person}{Jiaxin Song}, {and}
  \bibinfo{person}{Biaoshuai Tao}.} \bibinfo{year}{2025}\natexlab{}.
\newblock \showarticletitle{Approximability Landscape of Welfare Maximization
  within Fair Allocations}. In \bibinfo{booktitle}{\emph{Proceedings of the
  26th ACM Conference on Economics and Computation}}.
  \bibinfo{pages}{412--440}.
\newblock


\bibitem[Budish(2011)]%
        {Bud11}
\bibfield{author}{\bibinfo{person}{E. Budish}.}
  \bibinfo{year}{2011}\natexlab{}.
\newblock \showarticletitle{The Combinatorial Assignment Problem: Approximate
  Competitive Equilibrium from Equal Incomes}.
\newblock \bibinfo{journal}{\emph{Journal of Political Economy}}
  \bibinfo{volume}{119}, \bibinfo{number}{6} (\bibinfo{year}{2011}),
  \bibinfo{pages}{1061--1103}.
\newblock


\bibitem[Caragiannis et~al\mbox{.}(2019)]%
        {CGH19}
\bibfield{author}{\bibinfo{person}{Ioannis Caragiannis}, \bibinfo{person}{Nick
  Gravin}, {and} \bibinfo{person}{Xin Huang}.} \bibinfo{year}{2019}\natexlab{}.
\newblock \showarticletitle{Envy-Freeness Up to Any Item with High {N}ash
  Welfare: The Virtue of Donating Items}. In
  \bibinfo{booktitle}{\emph{Proceedings of the 2019 {ACM} Conference on
  Economics and Computation (EC)}}. \bibinfo{publisher}{{ACM}},
  \bibinfo{pages}{527--545}.
\newblock


\bibitem[Caragiannis et~al\mbox{.}(2009)]%
        {CKK09}
\bibfield{author}{\bibinfo{person}{I. Caragiannis}, \bibinfo{person}{C.
  Kaklamanis}, \bibinfo{person}{P. Kanellopoulos}, {and} \bibinfo{person}{M.
  Kyropoulou}.} \bibinfo{year}{2009}\natexlab{}.
\newblock \showarticletitle{The Efficiency of Fair Division}. In
  \bibinfo{booktitle}{\emph{Internet and Network Economics}}.
  \bibinfo{publisher}{Springer Berlin Heidelberg}, \bibinfo{pages}{475--482}.
\newblock


\bibitem[Caragiannis et~al\mbox{.}(2016)]%
        {CKM16}
\bibfield{author}{\bibinfo{person}{I. Caragiannis}, \bibinfo{person}{D.
  Kurokawa}, \bibinfo{person}{H. Moulin}, \bibinfo{person}{A. Procaccia},
  \bibinfo{person}{N. Shah}, {and} \bibinfo{person}{J. Wang}.}
  \bibinfo{year}{2016}\natexlab{}.
\newblock \showarticletitle{The Unreasonable Fairness of Maximum {N}ash
  Welfare}. In \bibinfo{booktitle}{\emph{Proceedings of the 2016 ACM Conference
  on Economics and Computation (EC)}}. \bibinfo{pages}{305--322}.
\newblock


\bibitem[Chakraborty et~al\mbox{.}(2021)]%
        {chakraborty2021weighted}
\bibfield{author}{\bibinfo{person}{Mithun Chakraborty}, \bibinfo{person}{Ayumi
  Igarashi}, \bibinfo{person}{Warut Suksompong}, {and} \bibinfo{person}{Yair
  Zick}.} \bibinfo{year}{2021}\natexlab{}.
\newblock \showarticletitle{Weighted envy-freeness in indivisible item
  allocation}.
\newblock \bibinfo{journal}{\emph{ACM Transactions on Economics and Computation
  (TEAC)}} \bibinfo{volume}{9}, \bibinfo{number}{3} (\bibinfo{year}{2021}),
  \bibinfo{pages}{1--39}.
\newblock


\bibitem[Chaudhury et~al\mbox{.}(2020a)]%
        {CGM20}
\bibfield{author}{\bibinfo{person}{Bhaskar~Ray Chaudhury},
  \bibinfo{person}{Jugal Garg}, {and} \bibinfo{person}{Kurt Mehlhorn}.}
  \bibinfo{year}{2020}\natexlab{a}.
\newblock \showarticletitle{EFX Exists for Three Agents}. In
  \bibinfo{booktitle}{\emph{Proceedings of the 21st ACM Conference on Economics
  and Computation (EC)}}. \bibinfo{publisher}{ACM}.
\newblock


\bibitem[Chaudhury et~al\mbox{.}(2021)]%
        {CGM21}
\bibfield{author}{\bibinfo{person}{Bhaskar~Ray Chaudhury},
  \bibinfo{person}{Jugal Garg}, {and} \bibinfo{person}{Ruta Mehta}.}
  \bibinfo{year}{2021}\natexlab{}.
\newblock \showarticletitle{Fair and Efficient Allocations under Subadditive
  Valuations}. In \bibinfo{booktitle}{\emph{Proceedings of the Thirty-Fifth
  {AAAI} Conference on Artificial Intelligence (AAAI)}}.
  \bibinfo{publisher}{{AAAI} Press}, \bibinfo{pages}{5269--5276}.
\newblock


\bibitem[Chaudhury et~al\mbox{.}(2020b)]%
        {CKMS20}
\bibfield{author}{\bibinfo{person}{Bhaskar~Ray Chaudhury},
  \bibinfo{person}{Telikepalli Kavitha}, \bibinfo{person}{Kurt Mehlhorn}, {and}
  \bibinfo{person}{Alkmini Sgouritsa}.} \bibinfo{year}{2020}\natexlab{b}.
\newblock \showarticletitle{A Little Charity Guarantees Almost Envy-Freeness}.
  In \bibinfo{booktitle}{\emph{Proceedings of the 2020 {ACM-SIAM} Symposium on
  Discrete Algorithms (SODA)}}. \bibinfo{publisher}{{SIAM}},
  \bibinfo{pages}{2658--2672}.
\newblock


\bibitem[Cole et~al\mbox{.}(2017)]%
        {CDGJMVY17}
\bibfield{author}{\bibinfo{person}{Richard Cole}, \bibinfo{person}{Nikhil~R.
  Devanur}, \bibinfo{person}{Vasilis Gkatzelis}, \bibinfo{person}{Kamal Jain},
  \bibinfo{person}{Tung Mai}, \bibinfo{person}{Vijay~V. Vazirani}, {and}
  \bibinfo{person}{Sadra Yazdanbod}.} \bibinfo{year}{2017}\natexlab{}.
\newblock \showarticletitle{Convex Program Duality, Fisher Markets, and {N}ash
  Social Welfare}. In \bibinfo{booktitle}{\emph{Proceedings of the 2017 {ACM}
  Conference on Economics and Computation (EC)}}. \bibinfo{publisher}{{ACM}},
  \bibinfo{pages}{459--460}.
\newblock


\bibitem[Cole and Gkatzelis(2015)]%
        {CG15}
\bibfield{author}{\bibinfo{person}{Richard Cole} {and} \bibinfo{person}{Vasilis
  Gkatzelis}.} \bibinfo{year}{2015}\natexlab{}.
\newblock \showarticletitle{Approximating the {N}ash Social Welfare with
  Indivisible Items}. In \bibinfo{booktitle}{\emph{Proceedings of the
  Forty-Seventh Annual {ACM} on Symposium on Theory of Computing (STOC)}}.
  \bibinfo{publisher}{{ACM}}, \bibinfo{pages}{371--380}.
\newblock


\bibitem[Cookson et~al\mbox{.}(2024)]%
        {CES24}
\bibfield{author}{\bibinfo{person}{Benjamin Cookson}, \bibinfo{person}{Soroush
  Ebadian}, {and} \bibinfo{person}{Nisarg Shah}.}
  \bibinfo{year}{2024}\natexlab{}.
\newblock \showarticletitle{Constrained Fair and Efficient Allocations}.
\newblock \bibinfo{journal}{\emph{arXiv preprint arXiv:2411.00133}}
  (\bibinfo{year}{2024}).
\newblock


\bibitem[Dobzinski et~al\mbox{.}(2024)]%
        {DLRV24}
\bibfield{author}{\bibinfo{person}{Shahar Dobzinski}, \bibinfo{person}{Wenzheng
  Li}, \bibinfo{person}{Aviad Rubinstein}, {and} \bibinfo{person}{Jan
  Vondr{\'a}k}.} \bibinfo{year}{2024}\natexlab{}.
\newblock \showarticletitle{A constant-factor approximation for {N}ash social
  welfare with subadditive valuations}. In
  \bibinfo{booktitle}{\emph{Proceedings of the 56th Annual ACM Symposium on
  Theory of Computing}}. \bibinfo{pages}{467--478}.
\newblock


\bibitem[Feldman et~al\mbox{.}(2024)]%
        {FMP24}
\bibfield{author}{\bibinfo{person}{Michal Feldman}, \bibinfo{person}{Simon
  Mauras}, {and} \bibinfo{person}{Tomasz Ponitka}.}
  \bibinfo{year}{2024}\natexlab{}.
\newblock \showarticletitle{On Optimal Tradeoffs between {EFX} and {N}ash
  Welfare}. In \bibinfo{booktitle}{\emph{Proceedings of the Thirty-Eighth
  {AAAI} Conference on Artificial Intelligence (AAAI)}}.
  \bibinfo{publisher}{{AAAI} Press}, \bibinfo{pages}{9688--9695}.
\newblock


\bibitem[Garg et~al\mbox{.}(2023)]%
        {GHL23}
\bibfield{author}{\bibinfo{person}{Jugal Garg}, \bibinfo{person}{Edin Husic},
  \bibinfo{person}{Wenzheng Li}, \bibinfo{person}{L{\'{a}}szl{\'{o}}~A.
  V{\'{e}}gh}, {and} \bibinfo{person}{Jan Vondr{\'{a}}k}.}
  \bibinfo{year}{2023}\natexlab{}.
\newblock \showarticletitle{Approximating {N}ash Social Welfare by Matching and
  Local Search}. In \bibinfo{booktitle}{\emph{Proceedings of the 55th Annual
  {ACM} Symposium on Theory of Computing (STOC)}}. \bibinfo{publisher}{{ACM}},
  \bibinfo{pages}{1298--1310}.
\newblock


\bibitem[Garg et~al\mbox{.}(2021)]%
        {GargHV21}
\bibfield{author}{\bibinfo{person}{Jugal Garg}, \bibinfo{person}{Edin Husic},
  {and} \bibinfo{person}{L{\'{a}}szl{\'{o}}~A. V{\'{e}}gh}.}
  \bibinfo{year}{2021}\natexlab{}.
\newblock \showarticletitle{Approximating {N}ash social welfare under rado
  valuations}. In \bibinfo{booktitle}{\emph{Proceedings of the Fifty-Third
  Annual {ACM} on Symposium on Theory of Computing (STOC)}}.
  \bibinfo{publisher}{{ACM}}, \bibinfo{pages}{1412--1425}.
\newblock


\bibitem[Kaneko and Nakamura(1979)]%
        {KN79}
\bibfield{author}{\bibinfo{person}{Mamoru Kaneko} {and}
  \bibinfo{person}{Kenjiro Nakamura}.} \bibinfo{year}{1979}\natexlab{}.
\newblock \showarticletitle{The {N}ash social welfare function}.
\newblock \bibinfo{journal}{\emph{Econometrica: Journal of the Econometric
  Society}} (\bibinfo{year}{1979}), \bibinfo{pages}{423--435}.
\newblock


\bibitem[Li and Vondr{\'{a}}k(2021a)]%
        {LiV21a}
\bibfield{author}{\bibinfo{person}{Wenzheng Li} {and} \bibinfo{person}{Jan
  Vondr{\'{a}}k}.} \bibinfo{year}{2021}\natexlab{a}.
\newblock \showarticletitle{A constant-factor approximation algorithm for
  {N}ash Social Welfare with submodular valuations}. In
  \bibinfo{booktitle}{\emph{Proceedings of the 62nd {IEEE} Annual Symposium on
  Foundations of Computer Science (FOCS)}}. \bibinfo{publisher}{{IEEE}},
  \bibinfo{pages}{25--36}.
\newblock


\bibitem[Li and Vondr{\'{a}}k(2021b)]%
        {LiV21b}
\bibfield{author}{\bibinfo{person}{Wenzheng Li} {and} \bibinfo{person}{Jan
  Vondr{\'{a}}k}.} \bibinfo{year}{2021}\natexlab{b}.
\newblock \showarticletitle{Estimating the {N}ash Social Welfare for coverage
  and other submodular valuations}. In \bibinfo{booktitle}{\emph{Proceedings of
  the 2021 {ACM-SIAM} Symposium on Discrete Algorithms (SODA)}}.
  \bibinfo{publisher}{{SIAM}}, \bibinfo{pages}{1119--1130}.
\newblock


\bibitem[Lipton et~al\mbox{.}(2004)]%
        {LMM04}
\bibfield{author}{\bibinfo{person}{R. Lipton}, \bibinfo{person}{E. Markakis},
  \bibinfo{person}{E. Mossel}, {and} \bibinfo{person}{A. Saberi}.}
  \bibinfo{year}{2004}\natexlab{}.
\newblock \showarticletitle{On Approximately Fair Allocations of Indivisible
  Goods}. In \bibinfo{booktitle}{\emph{Proceedings of the 5th ACM Conference on
  Electronic Commerce (EC)}}. \bibinfo{publisher}{ACM},
  \bibinfo{pages}{125--131}.
\newblock


\bibitem[Mahara(2024)]%
        {Mahara24}
\bibfield{author}{\bibinfo{person}{Ryoga Mahara}.}
  \bibinfo{year}{2024}\natexlab{}.
\newblock \showarticletitle{Extension of Additive Valuations to General
  Valuations on the Existence of {EFX}}.
\newblock \bibinfo{journal}{\emph{Math. Oper. Res.}} \bibinfo{volume}{49},
  \bibinfo{number}{2} (\bibinfo{year}{2024}), \bibinfo{pages}{1263--1277}.
\newblock


\bibitem[Moulin(2004)]%
        {Moulin04}
\bibfield{author}{\bibinfo{person}{Herv{\'e} Moulin}.}
  \bibinfo{year}{2004}\natexlab{}.
\newblock \bibinfo{booktitle}{\emph{Fair division and collective welfare}}.
\newblock \bibinfo{publisher}{MIT press}.
\newblock


\bibitem[Plaut and Roughgarden(2018)]%
        {PlautR20}
\bibfield{author}{\bibinfo{person}{Benjamin Plaut} {and} \bibinfo{person}{Tim
  Roughgarden}.} \bibinfo{year}{2018}\natexlab{}.
\newblock \showarticletitle{Almost Envy-Freeness with General Valuations}. In
  \bibinfo{booktitle}{\emph{Proceedings of the Twenty-Ninth Annual {ACM-SIAM}
  Symposium on Discrete Algorithms (SODA)}}. \bibinfo{publisher}{{SIAM}},
  \bibinfo{pages}{2584--2603}.
\newblock


\bibitem[Procaccia(2020)]%
        {procaccia2020technical}
\bibfield{author}{\bibinfo{person}{Ariel~D Procaccia}.}
  \bibinfo{year}{2020}\natexlab{}.
\newblock \showarticletitle{Technical perspective: An answer to fair division's
  most enigmatic question}.
\newblock \bibinfo{journal}{\emph{Commun. ACM}} \bibinfo{volume}{63},
  \bibinfo{number}{4} (\bibinfo{year}{2020}), \bibinfo{pages}{118--118}.
\newblock


\bibitem[Schrijver et~al\mbox{.}(2003)]%
        {schrijver2003combinatorial}
\bibfield{author}{\bibinfo{person}{Alexander Schrijver} {et~al\mbox{.}}}
  \bibinfo{year}{2003}\natexlab{}.
\newblock \bibinfo{booktitle}{\emph{Combinatorial optimization: polyhedra and
  efficiency}}. Vol.~\bibinfo{volume}{24}.
\newblock \bibinfo{publisher}{Springer}.
\newblock


\bibitem[Steinhaus(1948)]%
        {Ste48}
\bibfield{author}{\bibinfo{person}{H. Steinhaus}.}
  \bibinfo{year}{1948}\natexlab{}.
\newblock \showarticletitle{The Problem of Fair Division}.
\newblock \bibinfo{journal}{\emph{Econometrica}} \bibinfo{volume}{16},
  \bibinfo{number}{1} (\bibinfo{year}{1948}), \bibinfo{pages}{101--104}.
\newblock


\bibitem[Suksompong(2021)]%
        {suksompong2021constraints}
\bibfield{author}{\bibinfo{person}{Warut Suksompong}.}
  \bibinfo{year}{2021}\natexlab{}.
\newblock \showarticletitle{Constraints in fair division}.
\newblock \bibinfo{journal}{\emph{ACM SIGecom Exchanges}} \bibinfo{volume}{19},
  \bibinfo{number}{2} (\bibinfo{year}{2021}), \bibinfo{pages}{46--61}.
\newblock


\bibitem[Suksompong(2023)]%
        {Suks23}
\bibfield{author}{\bibinfo{person}{Warut Suksompong}.}
  \bibinfo{year}{2023}\natexlab{}.
\newblock \showarticletitle{A characterization of maximum {N}ash welfare for
  indivisible goods}.
\newblock \bibinfo{journal}{\emph{Economics Letters}}  \bibinfo{volume}{222}
  (\bibinfo{year}{2023}), \bibinfo{pages}{110956}.
\newblock


\bibitem[Varian(1973)]%
        {Varian73}
\bibfield{author}{\bibinfo{person}{Hal~R Varian}.}
  \bibinfo{year}{1973}\natexlab{}.
\newblock \showarticletitle{Equity, envy, and efficiency}.
\newblock \bibinfo{journal}{\emph{Journal of Economic Theory}}
  \bibinfo{volume}{9}, \bibinfo{number}{1} (\bibinfo{year}{1973}),
  \bibinfo{pages}{63--91}.
\newblock


\end{thebibliography}

\newpage
\appendix

\section{Appendix}

\subsection{Envy-Cycle Elimination Algorithm of \citet{LMM04} }
For completeness, in this section we provide a description of the envy-cycle elimination algorithm from \citet{LMM04}.
For any  (partial) allocation $\calA$, the  algorithm maintains a directed graph, $G_{\calA} = ([n], E)$, which captures the envy between the agents under $\calA$. Specifically, the $n$ vertices in the graph correspond to $n$ agents, respectively. Further, there is a directed arc $(i,j)\in E$ if and only if $i$ envies $j$'s bundle, i.e.,  $v_i(A_i)<v_i(A_j)$.

{\centering
	\begin{minipage}[t]{\linewidth}
		\begin{algorithm}[H]
			\caption{Envy-Cycle Elimination}
			\label{alg:ECE}
			\DontPrintSemicolon
			
			\KwIn{A fair division instance $\calI= \langle [n], [m], \set{v_i}_{i \in [n]} \rangle$ with value oracle access to the monotone valuations $v_i$'s. Partial allocation $\calP=(P_1,...,P_n)$ that is $\EFone$.}
			\KwOut{Complete \EFone allocation $\calA=(A_1,...,A_n)$.}
			
			Initialize allocation $\calA=(A_1,...,A_n) $ by setting $A_i = P_i$ for each $i\in [n]$.
			
			Let $M = [m]\setminus \left( \bigcup_{i\in [n]} P_i \right)$ be the unallocated goods. 
				
			\While{$M\neq \emptyset $} 
			{\label{line:outer} 
			\While{There exists a directed cycle in $G_{\calA}$}
			{\label{line:inner}
				Eliminate the cycle by reassigning the bundles in the reverse direction of the cycle.
			}
			
			Select an arbitrary source (unenvied) node $i \in [n]$ in $G_{\calA}$ and any unallocated good $g \in M$. 
			
			Update   $A_i \leftarrow A_i \cup \{ g \} $.
			
			Update $M\leftarrow M\setminus \{ g \}$. 
		}
			\Return{Allocation $\calA$}  
		\end{algorithm}
\end{minipage}
}

The algorithm maintains the $\EFone$ property as an invariant in each iteration of the outer while-loop (Line~\ref{line:outer}). This is because  eliminating cycles in the envy-graph preserves   $\EFone$. Furthermore, 
 if an item $g$ is allocated to a source node $i$ in $G_{\calA}$, then each agent $j\in [n]\setminus \{ i \}$  is $\EFone$ towards agent $i$, since we have $v_j(A_j)\geq v_j\left( \left( A_i \cup \{ g \} \right) \ \setminus \{ g \} \right)=v_j(A_i)  $. This  inequality holds since $i$ was a source (i.e., unenvied) node in the envy-graph. Together with the fact that the initial allocation $\calP$ is $\EFone$, these observations show that the algorithm maintains the $\EFone$ criterion in each iteration of the while-loop (Line~\ref{line:outer}). Therefore, the returned allocation is $\EFone$ as well. 

In addition, note that  the returned complete allocation $\calA$ satisfies $v_i(A_i)\geq v_i(P_i)$, for each $i\in [n]$, since the agents' values only weakly increase during the execution of the algorithm . 

Let us now analyze the running time of the algorithm. Note that, for each fixed (partial) allocation $\calA$ encountered during the algorithm,  finding and eliminating a cycle in the envy-graph $G_{\calA}$ takes $O(n^2)$ time. Further, whenever bundles are reallocated along an envy-cycle then at least one edge is removed from the envy-graph. Hence, the inner while-loop (Line~\ref{line:inner}) can be executed in time at most $O(n^4)$. Since the number of unallocated items decreases by one in each iteration of the outer while-loop (Line~\ref{line:outer}), we conclude that the algorithm runs in polynomial time. Overall, we obtain the following theorem. 

\begin{theorem}[\citet{LMM04}]
\label{theorem:lmm}
Let $\mathcal{I} = \langle [n], [m], \set{v_i}_{i \in [n]} \rangle$ be a fair division instance with monotone valuations $v_i$s. Then, given value oracle access to the $v_i$s and any partial $\EFone$ allocation $\mathcal{P} = (P_1, \ldots, P_n)$ as input, the envy-cycle elimination algorithm (Algorithm \ref{alg:ECE}) finds, in polynomial time, a complete $\EFone$ allocation $\mathcal{A}=(A_1, \ldots, A_n)$ with the property that $v_i(A_i) \geq v_i(P_i)$, for all agents $i \in [n]$. 
\end{theorem}

Indeed, instantiating Theorem \ref{theorem:lmm} with $P_i = \emptyset$, for all $i \in [n]$, yields an efficient algorithm for finding $\EFone$ allocations under monotone valuations. 

\subsection{Stronger Guarantee for Two Agents}
Here, we focus on the special case of two agents ($n=2$) and show that an $\NSW$ guarantee, stronger than the one obtained in Theorem \ref{thm:existEF1}, can be obtained for $\EFone$ allocations in this case.

\begin{theorem}\label{thm:twoagent} In any fair division instance among two agents ($n=2$) with subadditive valuations, there always exists an $\EFone$ allocation $\calA$ with the property that    
	$
	\NSW(\mathcal{A}) \geq \frac{1}{\sqrt{2}}    \NSW(\mathcal{A}^*)
	$. Here, $\calA^*=(A^*_1, A^*_2)$ denotes the Nash optimal allocation in the given instance. 
\end{theorem}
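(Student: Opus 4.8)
The plan is to improve on the generic factor-$2$ guarantee of Theorem~\ref{thm:existEF1}, which for $n=2$ degrades to exactly $\tfrac12\NSW(\calA^*)$. Writing $a \coloneqq v_1(A^*_1)$ and $b \coloneqq v_2(A^*_2)$, so that $\NSW(\calA^*)=\sqrt{ab}$, the target $\NSW(\calA)\ge \tfrac{1}{\sqrt2}\NSW(\calA^*)$ is exactly the requirement that some $\EFone$ allocation $\calA=(A_1,A_2)$ satisfy $v_1(A_1)\,v_2(A_2)\ge \tfrac12\,ab$. For $n=2$ the bottleneck in Theorem~\ref{thm:existEF1} is the configuration $|S_1|=|S_2|=1$ of the partial allocation $\overline{\calP}$, where $v_i(A^*_i)\le 2\,v_i(\overline P_i)$ holds for \emph{both} agents simultaneously; this balanced loss on each coordinate only yields $v_1(A_1)v_2(A_2)\ge \tfrac14\,ab$, i.e.\ the factor $\tfrac12$ on $\NSW$. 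So the whole point of the two-agent argument is to recover a factor of $\sqrt2$ in precisely this balanced-loss regime, turning the product bound $\tfrac14 ab$ into $\tfrac12 ab$.

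First I would dispose of the trivial case: if $\calA^*$ is itself $\EFone$, then $\calA=\calA^*$ gives ratio $1$. Otherwise some agent strongly envies the other under $\calA^*$; relabelling, assume agent $1$ does, so that $v_1(A^*_1) < v_1(A^*_2\setminus\{g\})$ for every $g\in A^*_2$. I would then transfer goods out of the coveted bundle $A^*_2$ into agent $1$'s bundle, considering allocations of the form $(A^*_1\cup T,\;A^*_2\setminus T)$ for a growing $T\subseteq A^*_2$, and stop at an inclusion-minimal $T^\star$ for which agent $1$ no longer strongly envies agent $2$. By monotonicity agent $1$ then holds full value $v_1(A^*_1\cup T^\star)\ge a$, and $\EFone$ on agent $1$'s side holds by the choice of $T^\star$. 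Subadditivity applied to $A^*_2 = (A^*_2\setminus T^\star)\cup T^\star$ gives $v_2(A^*_2\setminus T^\star)\ge b - v_2(T^\star)$, so if the transferred value $v_2(T^\star)$ can be kept at most $b/2$, agent $2$ retains value $\ge b/2$ and the product is $\ge a\cdot\tfrac{b}{2}=\tfrac12 ab$, as required. Should agent $2$ strongly envy the enlarged bundle at $T^\star$, I would invoke the symmetric construction (transferring out of $A^*_1$ instead) and argue that at least one of the two directions produces a genuinely $\EFone$ allocation meeting the product bound.

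The step I expect to be the crux is controlling the transferred value $v_2(T^\star)$ while simultaneously certifying $\EFone$ for agent $2$. Under mere subadditivity there is no a-priori upper bound on $v_2(T^\star)$ in terms of $b$ --- a single transferred good can already be worth all of $b$ to agent $2$ --- so the bound $v_2(T^\star)\le b/2$ cannot follow merely from transferring ``few'' goods, and must instead be extracted from the inclusion-minimality of $T^\star$ together with the strong-envy inequality holding at the step just before stopping (where agent $1$ still strongly envies agent $2$'s shrunken bundle). Ensuring, in addition, that agent $2$ does not develop its own irreparable strong envy of $A^*_1\cup T^\star$ is what forces the split into the two symmetric transfer directions. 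Reconciling these two requirements is exactly where the guarantee improves from the balanced-loss factor $2$ down to $\sqrt2$, and it is the part of the argument for which the two-agent structure --- a single contested bundle and only one transfer direction to analyze at a time --- appears essential.
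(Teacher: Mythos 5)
There is a genuine gap, and it is exactly where you flagged it --- but the flag is not a fixable technicality; the mechanism you propose cannot close it. Inclusion-minimality of $T^\star$ is a statement purely about agent $1$'s stopping condition and carries no information about $v_2(T^\star)$, and the ``strong-envy inequality just before stopping'' is again a $v_1$-inequality; neither can yield the cross-valuation bound $v_2(T^\star)\le b/2$. Concretely, take $A^*_1=\{h\}$, $A^*_2=\{g_1,g_2,g_3\}$, with $v_1(S)=1+\delta$ whenever $S\cap A^*_2\neq\emptyset$, $v_1(\{h\})=1$, $v_1(\emptyset)=0$ (monotone and subadditive), and $v_2$ additive with $v_2(g_1)=1$, $v_2(g_2)=v_2(g_3)=\epsilon$, $v_2(h)=0$, where $\delta=\epsilon/2$ and $\epsilon$ is small. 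One checks $\calA^*=(\{h\},A^*_2)$ is Nash optimal, and agent $1$ strongly envies since $v_1(A^*_2\setminus\{g\})=1+\delta>1$ for every $g$. Every singleton of $A^*_2$ is an inclusion-minimal stopping set: after transferring any one good, agent $1$ has value $1+\delta\ge v_1\bigl((A^*_2\setminus T)\setminus\{g\}\bigr)=1+\delta$. Nothing in your procedure forbids $T^\star=\{g_1\}$, which gives the complete allocation $(\{h,g_1\},\{g_2,g_3\})$. This allocation is $\EFone$ in \emph{both} directions (agent $2$'s envy of $\{h,g_1\}$ vanishes upon removing $g_1$), so your symmetric fallback --- which is keyed to agent $2$ developing strong envy, the wrong failure event --- is never triggered; yet $\NSW^2=(1+\delta)\cdot 2\epsilon$ against the optimal $1+2\epsilon$, so the ratio tends to $0$ as $\epsilon\to 0$. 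Your procedure specifies no selection rule among minimal stopping sets, and no argument that a value-preserving choice always exists; even granting a clever rule, the claimed implication from minimality to $v_2(T^\star)\le b/2$ is simply false.

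The paper's proof avoids this cross-valuation trap by inverting the roles: with agent $2$ the (unique, by Pareto optimality of $\calA^*$) envier, it carves $Z\subseteq A^*_1$ inclusion-minimal with $v_2(Z)>v_2(A^*_2)$ --- minimality measured in the \emph{envier's own} valuation, which directly certifies $\EFone$ for the envier --- and then case-splits on whether $v_1(Z)\ge\frac{1}{2}v_1(A^*_1)$. All subadditivity is applied within a single agent's valuation on a partition of that agent's own optimal bundle (e.g.\ $v_1(A^*_1\setminus Z)\ge v_1(A^*_1)-v_1(Z)$), which is exactly the kind of bound subadditivity supports. Two further ingredients that your all-goods transfer scheme lacks: the paper works with \emph{partial} allocations, freely discarding $A^*_1\setminus Z$ or even all of $A^*_2$ (completing afterwards via envy-cycle elimination of Lipton et al., which preserves $\EFone$ and never decreases values), and it exploits envy-cycle swaps --- when both agents prefer the other's bundle, exchanging them secures value $\ge\frac{1}{2}v_1(A^*_1)$ for agent $1$ and $\ge v_2(Z)>v_2(A^*_2)$ for agent $2$ simultaneously. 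If you want to salvage your transfer idea, you would need to replace the $v_1$-driven stopping rule by a carving whose minimality is measured by the agent who stands to lose value; at that point you have essentially rederived the paper's set $Z$.
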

\begin{proof} 
	If the Nash optimal allocation $\calA^*$ is envy-free, then the theorem holds trivially. Next, note that $\calA^*$ is Pareto efficient and, hence, no envy-cycle exists under $\calA^*$. That is, one of the agents does not envy the other. Hence, without loss of generality, we may assume that agent 1 does not envy agent 2, but agent 2 envies agent 1, i.e., $v_1(A^*_1)\geq v_1(A^*_2)$ and $v_2(A^*_2) < v_2(A^*_1)$. Let $Z\subseteq A_1^*$ be an inclusion-wise minimal subset with the property that $v_2 (Z)> v_2(A_2^*)$. We divide the analysis into the following two cases. 
	
	\noindent		
	{\it Case 1: $v_1(Z)\geq \frac{1}{2}v_1(A_1^*)$.} 	Consider the partial allocation $(Z, A_2^*)$. In this partial allocation, if there is an envy cycle, then after exchanging the bundles, we obtain a partial  envy-free allocation with NSW is at least $\frac{1}{\sqrt{2}} \NSW(\calA^*)$. On the other hand, if there is no envy cycle under $(Z, A_2^*)$, then only agent 2 envies agent 1, since $v_2(Z)> v_2(A_2^*)$. However, the minimality of $Z$ ensures that the envy of agent 2 towards 1 can be mitigated by the removal of one good. Hence, in this case, $(Z, A_2^*)$ is an $\EFone$ allocation with NSW is at least $\frac{1}{\sqrt{2}} \NSW(\calA^*)$. \\

	\noindent	
	{\it Case 2: $v_1(Z)< \frac{1}{2}v_1(A_1^*)$.}  Consider the partial allocation $(A_1^*\setminus Z,  Z)$ obtained by partitioning $A^*_1$ among the two agents. In the current case, the subadditivity of $v_1$ gives us $v_1(A_1^*\setminus Z)> \frac{1}{2} v_1(A_1^*)$. Hence, in the partial allocation $(A_1^*\setminus Z,  Z)$, agent 1 does not envy agent 2, $v_1(A_1^*\setminus Z)> \frac{1}{2}v_1(A_1^*)>v_1(Z)$. If the $\EFone$ criterion holds for agent 2 towards agent 1, then we have a partial $\EFone$ allocation, $(A_1^*\setminus Z,  Z)$, with NSW is at least $\frac{1}{\sqrt{2}}$ times the optimal. Hence, we may assume that agent 2 envies agent 1 by more than one good. In such a scenario, we iteratively move goods from agent 1 to agent 2 until one of the following two conditions is satisfied for the first time:
	\begin{itemize}
		\item Agent 1 envies agent 2. 
		\item $\EFone$ criterion holds for agent 2 towards agent 1.
	\end{itemize}
	If agent 1 envies agent 2, then there is an envy cycle. Here, by permuting the bundles we obtain an allocation wherein agent 1 gets a bundle whose value is at least $\frac{1}{2}v_1(A_1^*)$ and agent 2 gets a bundle whose value is at least $v_2(Z)> v_2(A_2^*)$. If the second condition occurs, then we have $\EFone$ from agent 2 towards agent 1 and agent 1 does not envy agent 2. Here again, we have a partial $\EFone$ allocation with the desired approximation guarantee. \\ 
	
	In both the cases above, we have shown the existence of a partial $\EFone$ allocation with Nash welfare at least $1/\sqrt{2}$ times the optimal. Using the algorithm of \citet{LMM04}, we can extend such a partial $\EFone$ allocation to a complete $\EFone$ allocation, while maintaining the NSW approximation guarantee. This establishes the existence of the desired $\EFone$ allocation and completes the proof of the theorem. 		 
\end{proof}

\begin{remark} The bound obtained in Theorem~\ref{thm:twoagent} is tight, as demonstrated  by the instance given in Proposition~\ref{prop:identical}. In this instance,  there are two agents with identical subadditive valuations for which  no $\EFone$ allocation achieves  $\NSW$ more than a factor $\frac{1}{\sqrt{2}}$  of the optimal.
	
	\end{remark}

\section{Incompatibility Results}
{
As shown in Theorem~\ref{theorem:ef1-lowerb}, for subadditive valuations, there exists instances wherein no $\EFone$ allocation achieves $\NSW$ more than half of the of the optimal. This section refines the incompatibility to relevant special cases. 
 We first show that, even when agents have identical valuations, achieving $\EFone$ with an approximation better than $\frac{1}{\sqrt{2}}$ of the optimal $\NSW$ is not possible. This result also extends to partial and complete $\EFX$ allocations. Hence, the following result highlights the limitations of approaches explicitly designed for identical valuations, such as Leximin++ by \cite{PlautR20}.
}

	\begin{proposition}\label{prop:identical} There exists a fair division instance with identical, subadditive valutions such that no $\EFone$ allocation $\calA$ has Nash social welfare more than $\frac{1}{\sqrt{2}}$ times the optimal, i.e., for every $\EFone$ allocation $\calA$ we have 
		$$\NSW(\mathcal{A}) \leq \frac{1}{\sqrt{2}} \NSW(\mathcal{A}^*).$$
	 Here, $\calA^*$ denotes the Nash optimal allocation in the given instance.
		\end{proposition}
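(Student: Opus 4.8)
The plan is to exhibit an explicit instance carrying a single identical valuation $v$ in which essentially all of the welfare is concentrated in one ``synergy'' block that resists the removal of any one good. Concretely, I would take a ground set consisting of a large block $T=\{a_1,\dots,a_{t+1}\}$ together with one filler good $c$, and set $v(S)=\max\{\tfrac2t\min(|S\cap T|,\,t),\ \mathbf 1[c\in S]\}$. As a maximum of two monotone subadditive functions this $v$ is subadditive, and it is the same for every agent. The purpose of the threshold is that every subset of $T$ of size at least $t$ is worth the full value $2$, so $T$ is \emph{sticky}: deleting any single good from $T$ still leaves value $2$, yet $T$ cannot be split into two full-value halves.

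First I would pin down the optimum. Handing the whole block $T$ to one agent (value $2$) and the filler $c$ to the other (value $1$) gives a Nash-optimal allocation $\calA^*$ with $\NSW(\calA^*)=\sqrt2$; and because $T$ is sticky, no single-good deletion drops its value to $1$, so the envying agent's envy cannot be removed by one good and $\calA^*$ is \emph{not} EF1. This is exactly the mechanism that forces a loss, and it is what the preceding two-agent theorem must pay for.

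The core step is to upper bound $\NSW(\calA)$ over all EF1 allocations $\calA$. Here I would run a short case analysis on how many goods of $T$ each agent receives: to be EF1, the holder of a large piece of $T$ must shed goods until her bundle can be reduced below the other agent's value by deleting a single good, and subadditivity ($v(S\cup T)\le v(S)+v(T)$) then caps the realizable product welfare at half the optimal product — a factor $1/\sqrt2$ in the geometric mean. The analysis should show that the best EF1 allocation is the balanced split of $T$, whose Nash welfare tends to $\tfrac1{\sqrt2}\NSW(\calA^*)$. Finally, since every (partial or complete) $\EFX$ allocation is in particular $\EFone$, and $\calA^*$ is itself not $\EFX$, the same bound transfers verbatim to the $\EFX$ notions, as claimed.

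The hard part will be making the EF1 upper bound tight at exactly $1/\sqrt2$ rather than merely approaching it. Subadditivity only guarantees that breaking a sticky bundle loses \emph{at most} a factor two, and a naive instance always admits an EF1 ``one-good-over-threshold'' reallocation whose product overshoots half the optimum by an $O(1/t)$ amount (with the synergy threshold $t$); giving best EF1 Nash welfare $\tfrac{1+1/t}{\sqrt2}\NSW(\calA^*)$. I would therefore obtain the clean bound by driving the threshold $t\to\infty$ (a family of instances, in the style of the preceding $(\tfrac12+\varepsilon)$-theorem), or else tune the filler value so that the boundary reallocation is excluded, and then verify by direct computation that no EF1 allocation beats $\tfrac1{\sqrt2}\NSW(\calA^*)$.
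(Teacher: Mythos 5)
Your proposal is correct and essentially matches the paper's own proof: the paper likewise takes two agents with an identical subadditive valuation concentrating value $2$ on a ``sticky'' block paired with a cheap filler good (three goods $\{a,b,c\}$ with $v(\{b,c\})=2$, $v(\{b\})=v(\{c\})=1$, $v(\{a\})=1-\epsilon$), so that the Nash optimum $(\{b,c\},\{a\})$ with $\NSW=\sqrt{2(1-\epsilon)}$ fails \EFone while every \EFone allocation has $\NSW\le 1$, and the clean constant $\tfrac{1}{\sqrt{2}}$ is then obtained ``by letting $\epsilon\to 0$.'' Notably, the tightness concern in your final paragraph applies verbatim to the paper's argument as well---for any fixed $\epsilon>0$ its instance only gives the ratio $1/\sqrt{2(1-\epsilon)}>1/\sqrt{2}$---so your $t\to\infty$ family (with its correctly computed best-\EFone ratio $(1+1/t)/\sqrt{2}$) is an equally rigorous, if less economical, realization of the same idea.
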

		\begin{proof}
			Consider a fair division instance with three goods $\{a,b,c\}$ and two agents with identical subadditive valuations given below: 
			$$
			v(S)=
			\begin{cases}
				2 \ \ \  &\text{if} \quad S=\{a,b,c\} \\
				2 \ \ \  &\text{if} \quad   S=\{b,c\} \\
				1 \ \ \ \ &\text{if} \quad  S=\{a,b\}  \text{ or } \{a,c\}  \\
				1 \ \ \ \ &\text{if} \quad  S=\{b\}  \text{ or } \{c\}  \\
				1-\epsilon &\text{if} \quad  S=\{a\}  \\
				0 \ \ \ \ &\text{if} \quad  S= \emptyset
			\end{cases}
			$$
			In this instance, the Nash optimal allocation  $\calA^*$  assigns one agent the set of goods $\{b,c\}$ and the other agent the good $\{a\}$.  This results in optimal Nash welfare of $\sqrt{2(1-\epsilon)}$. Note that the Nash optimal allocation is not  $\EFone$, since even after the removal of a good, the agent who got $\{a\}$ still envies the other agent. Observe that every allocation, excluding the Nash optimal one, has Nash welfare of at most $1$.  Hence, $\NSW(\calA) \leq 1$ for any $\EFone$ allocation $\calA$.  The stated bound in the proposition follows by letting $\epsilon \rightarrow 0$. 
			\end{proof}

\subsection{Incompatibility under Submodular Valuations}

Recall that submodular valuations constitute a sub-class of subadditive ones. The theorem below shows that, even for submodular (in fact, gross substitute) valuations, one cannot always expect $\EFone$ allocations with $\NSW$ more than {$\frac{1}{e^{1/e}} \approx \frac{1}{1.45}$} times the optimal.

\begin{theorem} For any $\varepsilon>0$, there exist fair division instances with submodular (in fact, gross substitute) valuations wherein no $\EFone$ allocation has Nash social welfare more than $\left(\frac{1}{e^{1/e}} +\varepsilon \right)$ times the optimal i.e., for \textit{every} $\EFone$ allocation $\calA$ 
we have
$$\NSW(\mathcal{A}) \leq \left( \frac{1}{e^{1/e}} +\varepsilon \right)    \NSW(\mathcal{A}^*).$$
Here, $\calA^*$ denotes the Nash optimal allocation in the given instance.
\end{theorem}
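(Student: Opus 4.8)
The plan is to construct an explicit instance with submodular (indeed gross substitutes) valuations where the optimal $\NSW$ is large but every $\EFone$ allocation is forced to distribute value unevenly, and to tune parameters so the ratio approaches $\frac{1}{e^{1/e}} = e^{-1/e}$. Since the bound $e^{1/e} \approx 1.45$ is precisely the best known approximation factor for optimal $\NSW$ under additive valuations \cite{BKV18}, and is the known tight constant appearing in the analysis of the product $\prod_i (1+1/k_i)$-type terms, I expect the instance to be built around the arithmetic that makes $\left(1 + \tfrac{1}{k}\right)^k \to e$ and $k^{1/k}$ maximized at $k=e$. Concretely, I would mimic the structure of the subadditive hard instance given just above: take $n$ agents arranged so that each agent $i < n$ has overlapping interest with agent $i+1$ over a shared pool of goods, but now replace the subadditive ``jump'' valuation with a \emph{weighted matroid rank} (or simple concave-over-a-partition) function, which is submodular and in fact gross substitutes.

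First I would fix the good set and the valuation template. A natural choice is to let each agent value a single type of good with a concave, gross-substitutes function—for instance $v_i(S) = f(|S \cap T_i|)$ for a concave $f$ restricted to a group $T_i$, glued across agents with slight overlaps so that $\EFone$ feasibility propagates a constraint from agent to agent (exactly as the $G_{j-1} \subseteq A_{j-1}$ cascade did in the subadditive proof). Second, I would compute $\NSW(\calA^*)$ by assigning each agent its ``own'' block of $k$ goods, giving optimal value roughly $f(k)^{1}$ per agent and hence $\NSW(\calA^*) \approx f(k)$. Third, and this is the crux, I would argue that any $\EFone$ allocation cannot give all agents their full block: the envy-up-to-one-good constraints chain together (via an inductive argument on the shared goods) to force the per-agent value down to approximately $f(k)/k \cdot (k-1)$ or, more likely, to a level where the geometric mean is depressed by a factor tending to $k^{1/k}$, i.e. $f(k-1)$ versus $f(k)$ across the cascade.

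The hard part will be the combinatorial forcing step: showing that the $\EFone$ constraints genuinely prevent a better-than-$e^{-1/e}$ allocation, rather than merely a $\tfrac12$-type loss as in the subadditive case. The improvement to the sharper constant $e^{-1/e}$ (as opposed to $\tfrac12$) must come from the submodular/gross-substitutes structure allowing a \emph{graded} transfer of goods so that the worst case optimizes $k^{1/k}$ at $k$ near $e$; I would therefore parametrize the block size $k$ and take the ratio $\frac{\text{(EF1 value)}}{\text{(optimal value)}}$ to behave like $(k-1)/k$ raised to a power, or like $k^{-1/k}$, and then maximize over $k$ to get $e^{-1/e}$. Finally I would verify gross substitutes explicitly (e.g., by exhibiting the valuations as matroid rank or assignment valuations, which are known to be gross substitutes) and let $\delta \to 0$ and $k \to$ the integer nearest $e$ (or take a limit with non-integer weights) to push the ratio arbitrarily close to $\frac{1}{e^{1/e}} + \varepsilon$. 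The main obstacle I anticipate is engineering the valuations so they are simultaneously (i) genuinely gross substitutes, (ii) force the cascade of $\EFone$ constraints tightly, and (iii) yield the clean $e^{1/e}$ constant rather than a messier bound; balancing these three requirements in a single closed-form valuation is where most of the design effort will go.
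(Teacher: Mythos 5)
Your proposal correctly intuits the source of the constant---the function $k^{-1/k}$ (equivalently $x^x$ with $x=1/k$) optimized at $k=e$---and the idea of using concave-of-cardinality, gross-substitutes valuations. But the step you yourself flag as ``the crux,'' the combinatorial forcing argument, is left unsolved, and the route you sketch for it is likely a dead end. You propose to adapt the cascade from the paper's subadditive lower bound (the $G_{j-1}\subseteq A_{j-1}$ propagation), but that cascade hinges on the jump in the subadditive valuation: the term $\max\{1,(1-\delta)|S\cap G_i|\}$ awards value $1$ for holding a \emph{single} good of the next group, and this jump is precisely what fails to be submodular. With a genuinely concave $f(|S\cap T_i|)$, the \EFone constraints no longer force an agent to hold her entire block---envy can be relieved by gradual, partial transfers along the chain---so the forcing degrades and you do not recover the sharp $e^{-1/e}$ constant. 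In short, requirement (ii) in your own list of desiderata fails for the template you chose, and the proposal offers no mechanism to repair it.

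The paper's construction avoids any cascade and is much simpler than what you anticipate. All $m$ goods are identical: agents in a set $H$ have the additive valuation $v(S)=|S|$, while the remaining agents in $L$ have $v'(S)=|S|^{\delta}$ with $\delta=\frac{n}{|L|m}$ tiny. Because every valuation is strictly increasing in cardinality and the goods are interchangeable, \EFone forces an \emph{exactly equal} split, $|A_i|=m/n$ for all $i$: if some bundle had $m/n+1$ goods, a bundle with $m/n-1$ goods would still envy it after the removal of any single good. The optimum instead gives each $L$-agent one good (worth $1$, nearly the maximum of $|S|^{\delta}$ when $\delta$ is small) and concentrates the remaining goods on $H$, so each $H$-agent in the \EFone allocation loses a factor of roughly $|H|/n$, and the NSW ratio becomes approximately $\left(\frac{|H|}{n}\right)^{|H|/n}$, minimized at $|H|/n \approx 1/e$, yielding $e^{-1/e}$ (with the $L$-agents' and boundary terms driven to $1$ by taking $m$ large). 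The extremal ratio thus arises from the \emph{population fraction} of high-value agents under a forced equal division, not from a block size $k$ propagated through envy constraints; that is the missing idea in your plan, and without it your three requirements (gross substitutes, tight forcing, clean constant) cannot be met simultaneously by the chain template.
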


\begin{proof}
Consider division of $m$ indivisible goods among $n$ agents that are partitioned into two groups: $H$ and $L$. Agents $i \in H$ have (identical) valuation $v(S) \coloneqq |S|$, for all subsets $S \subseteq [m]$. In addition, each agent $j \in L$, has (identical) valuation $v'(S) \coloneqq |S|^{\delta}$, for all $S \subseteq [m]$ and a fixed parameter $\delta \coloneqq \frac{n}{|L| m} < 1$. 

Note that the agents valuations, $v$ and $v'$, are submodular (in fact, gross substitute) and they depend only on the cardinality of the assigned bundle; in this sense, the goods are identical. 

Here, the NSW of the optimal allocation $\mathcal{A}^*$ is at least as much as the $\NSW$ obtained by assigning $\frac{m-|L|}{|H|} $ goods to each agent $i \in H$, and one good to each agent $j \in L$.


 
 Further, assume that $m$ is divisible by $n$. In the current instance, in any $\EFone$ allocation $\calA=(A_1, \ldots, A_n)$ each agent must receive exactly $\frac{m}{n}$ goods. Note that, if $|A_i| \geq \frac{m}{n}+1$, for some $i$, then there must exist a bundle $A_j$ with cardinality at most $\frac{m}{n}-1$. In such a case, the $\EFone$ criterion would not hold for $j$ against $i$. Hence, in any $\EFone$ allocation $\calA$ we have $|A_i| = m/n$ for all $i$.
 
Therefore, for any $\EFone$ allocation $\calA$ the following bounds hold 
				\begin{align}
				\frac{\NSW(\calA)}{\NSW(\calA^*)} & \leq \left(\frac{  \left(\frac{m}{n} \right)^{\delta |L|}\left( \frac{m}{n} \right)^{|H|}}{\left( \frac{m-|L|}{|H|} \right)^{|H|}} \right)^{1/n} \nonumber \\ 
				& = \left(  \frac{m}{n} \right)^{ \frac{ \delta |L|}{n} } \left( \frac{m}{m-|L|}  \right)^{\frac{|H|}{n}}  \left(  \frac{|H|}{n}  \right)^{\frac{|H|}{n}} \nonumber \\ 
				& = \left(  \frac{m}{n} \right)^{ \frac{ \delta |L|}{n} } \left(  \frac{1}{1- \frac{|L|}{m} }  \right)^{\frac{|H|}{n}}  \left(  \frac{|H|}{n}  \right)^{\frac{|H|}{n}} \nonumber \\
				& = \left(  \frac{m}{n} \right)^{ \frac{ 1}{m} }  \left(  \frac{1}{1- \frac{|L|}{m} }  \right)^{\frac{|H|}{n}}  \left(  \frac{|H|}{n}  \right)^{\frac{|H|}{n}}
				 \label{ineq:nsw-up} 
			\end{align} 
Setting $|H|/n \approx e^{-1}$, gives us $\left( \frac{|H|}{n} \right)^{\frac{|H|}{n}} \approx \frac{1}{e^{1/e}}$. Further, for any $\varepsilon>0$, a sufficiently large $m$ ensures that the product of first two terms in the right-hand-side of equation (\ref{ineq:nsw-up}) satisfies $\left(  \frac{m}{n} \right)^{ \frac{ 1}{m} }  \left(  \frac{1}{1- \frac{|L|}{m} }  \right)^{\frac{|H|}{n}} \leq 1+\varepsilon$. 

Hence, for any $\varepsilon>0$, with $|H|/n \approx e^{-1}$ and a sufficiently large $m$, equation (\ref{ineq:nsw-up}) reduces to the stated bound: $ \frac{\NSW(\calA)}{\NSW(\calA^*)}  \leq \frac{1}{e^{1/e}} +\varepsilon$. The theorem stands proved. 
		\end{proof}

\subsection{Impossibility Result for Superadditive Valuations}		
In contrast to subadditive valuations, superadditive functions capture complementarity in preferences. Specifically, a set function $f: 2^{[m]} \mapsto \mathbb{R}_+$ is said to be superadditive if $f(S \cup T) \geq f(S) + f(T)$, for all subsets $S, T \subseteq [m]$.

The proposition below shows that---quite unlike the subadditive case---one cannot expect any compatibility between $\EFone$ and $\NSW$ for superadditive valuations. Specifically, here we show that, under superadditive valuations, the multiplicative gap between the $\NSW$ of $\EFone$ allocations and the optimal can be unbounded. 
		
\begin{proposition}
There exist fair division instances, with superadditive valuations, wherein the optimal $\NSW$ is positive, but for \textit{every} $\EFone$ allocation $\calA$ we have $\NSW(\calA) = 0$. 
\end{proposition}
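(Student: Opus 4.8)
The plan is to exhibit a single instance with two agents and four goods $\{a,b,c,d\}$ in which the optimal Nash welfare is strictly positive, yet every positive-$\NSW$ allocation (equivalently, every allocation in which both agents receive positive value) fails $\EFone$. Since $\EFone$ allocations always exist for monotone valuations (by \citet{LMM04}), this immediately forces every $\EFone$ allocation to have $\NSW = 0$. Concretely, I would take agent $1$ to have the robust superadditive valuation $v_1(S) = |S|^2$ and agent $2$ to have the threshold superadditive valuation $v_2(S) = \max\{0,\, |S| - 2\}$. First I would verify that both valuations are nonnegative, normalized, monotone, and superadditive: for $v_1$ this reduces to $(|S|+|T|)^2 \geq |S|^2 + |T|^2$ for disjoint $S,T$, and for $v_2$ to a short case check on whether $|S|$ and $|T|$ exceed the threshold $2$.

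Next I would pin down the positive-$\NSW$ allocations. Since $v_2(A_2) > 0$ requires $|A_2| \geq 3$ while $v_1(A_1) > 0$ requires $|A_1| \geq 1$, and there are only four goods, the only allocations giving both agents positive value are those splitting the goods as $|A_1| = 1$, $|A_2| = 3$. Each such allocation has $v_1(A_1) = 1$ and $v_2(A_2) = 1$, so $\NSW = 1 > 0$; in particular, the optimal $\NSW$ is positive.

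The crux is then to show that every $(1,3)$ allocation violates $\EFone$ from agent $1$'s side. Agent $1$ holds a single good worth $1$, whereas agent $2$ holds a three-good bundle worth $v_1(A_2) = 9$ to agent $1$. Removing any single good from $A_2$ still leaves a two-good set worth $v_1(\cdot) = 4 > 1 = v_1(A_1)$, so agent $1$'s envy toward agent $2$ cannot be eliminated by deleting any one good, and the allocation is not $\EFone$. Since every positive-$\NSW$ allocation is of this form, every $\EFone$ allocation must leave some agent with value $0$, i.e., $\NSW(\calA) = 0$, which is exactly the claimed impossibility.

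I expect the main obstacle to be ruling out what might be called the ``balanced-split escape.'' For many natural superadditive constructions, complementarity actually makes $\EFone$ \emph{easier} to satisfy, because deleting a single good from an envied bundle can destroy its complementary value and thereby cancel the envy; this is why a naive all-or-nothing instance (single goods worth $0$, pairs worth $1$) would fail. The design circumvents this by (i) giving the potential envier, agent $1$, a smooth, robust valuation whose value degrades only mildly under single-good removal, and (ii) forcing the envied bundle to have size at least $3$, so that even after one deletion it retains two goods and stays strictly above the envier's own value. Setting the threshold of $v_2$ at $2$ rather than $1$ is precisely what secures this robustness margin: with a threshold of $1$ there would be an $\EFone$, positive-$\NSW$ split, and the construction would collapse.
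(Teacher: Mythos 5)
Your proposal is correct and is essentially the paper's own construction with the agents' roles mirrored: both use two agents and four goods, pairing a threshold-type valuation that is positive only on bundles of size at least three (so every positive-$\NSW$ allocation must be the $(1,3)$ split) with a second valuation that grows with cardinality and whose envy rules out exactly that split under \EFone. The paper instantiates this with agent 1 holding the threshold valuation ($v_1(S)=1$ iff $|S|\geq 3$) and agent 2 additive with $v_2(S)=|S|$, arguing that agent 2's \EFone constraint forces $|A_2|\geq 2$ and hence $v_1(A_1)=0$; your choices $v_1(S)=|S|^2$ and $v_2(S)=\max\{0,\,|S|-2\}$ implement the same mechanism with the envier and the threshold agent swapped.
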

		\begin{proof}
			Consider an instance with two agents ($n=2$) and four goods $\{a,b,c, d\}$. Agent 1 has monotone superadditive valuation
			$$
			v_1(S) \coloneqq
			\begin{cases}
				2 \ \ \ \text{if} \ \ \ \  |S| = 4 \ \\
				1 \ \ \ \text{if} \ \ \ \   |S|  = 3 \ \\
				0 \ \ \ \ \text{otherwise.}
			\end{cases}
			$$
			Agent 2 has additive valuation $v_2(S)=|S|$.  Observe that, in a Nash optimal allocation $\calA^*$, agent 1 receives any bundle of size three, and the last remaining good is allocated to agent 2, resulting in $\NSW(\calA^*)=1$. However, in every $\EFone$ allocation $\calA=(A_1, A_2)$, the second agent receives at least two items. Hence, $v_1(A_1)=0$, showing that $\NSW(\calA)=0$. 
		\end{proof}

\end{document}